\documentclass[a4paper,12pt]{article}

\usepackage[T1]{fontenc}
\usepackage{subcaption}
\usepackage{amsmath}
\usepackage{amsbsy}
\usepackage{amsthm}
\usepackage{amsfonts}
\usepackage{amssymb}
\usepackage{mathtools}
\usepackage{array}
\usepackage{bbm} %
\usepackage{tikz}
\usepackage{newunicodechar}
\newunicodechar{ȩ}{\c{e}} %

\usepackage[style=alphabetic,url=false,doi=false,backend=biber,maxbibnames=20]{biblatex}

\AtEveryBibitem{\clearfield{number}}
\AtEveryBibitem{\clearfield{issn}}

\addbibresource{heat_kernel.bib}
\addbibresource{dimer.bib}
\addbibresource{ising.bib}

\setlength{\marginparwidth}{3cm}
\reversemarginpar

\usetikzlibrary{math}
\usetikzlibrary{calc}

\usepackage{hyperref}
\usetikzlibrary{external}
\makeatletter
\@ifpackageloaded{hyperref}{ %
	}{
	\newcommand{\href}[2]{#2}
	\tikzexternalize
	\renewcommand{\todo}[2][]{\tikzexternaldisable\@todo[#1]{#2}\tikzexternalenable}
}
\makeatother

\usepackage[noabbrev,capitalize]{cleveref}
\usepackage{lastpage}

\makeatletter 
\tikzset{curlybrace/.style={rounded corners=2pt,line cap=round}}%
\pgfkeys{%
/curlybrace/.cd,%
tip angle/.code     =  \def\cb@angle{#1},
/curlybrace/.unknown/.code ={\let\searchname=\pgfkeyscurrentname
                              \pgfkeysalso{\searchname/.try=#1,
                              /tikz/\searchname/.retry=#1}}}  
\def\curlybrace{\pgfutil@ifnextchar[{\curly@brace}{\curly@brace[]}}%

\def\curly@brace[#1]#2#3#4{%
\pgfkeys{/curlybrace/.cd,
tip angle = 0.75}%
\pgfqkeys{/curlybrace}{#1}%
\ifnum 1>#4 \def\cbrd{0.05} \else \def\cbrd{0.075} \fi
\draw[/curlybrace/.cd,curlybrace,#1]  (#2:#4-\cbrd) -- (#2:#4) arc (#2:{(#2+#3)/2-\cb@angle}:#4) --({(#2+#3)/2}:#4+\cbrd) coordinate (curlybracetipn);
\draw[/curlybrace/.cd,curlybrace,#1] ({(#2+#3)/2}:#4+\cbrd) -- ({(#2+#3)/2+\cb@angle}:#4) arc ({(#2+#3)/2+\cb@angle} :#3:#4) --(#3:#4-\cbrd);
}

\tikzset{vertex/.style={circle,fill=black,inner sep=1pt}
	,cutline/.style={blue,very thick}
	,cutbond/.style={red}
}

\newcounter{Ccounter}
\newcommand{\Clast}{\ensuremath{C_{\theCcounter}}}
\newcommand{\C}{
	{\refstepcounter{Ccounter}
	\Clast}
}
\newcommand{\Cl}[1]{
	{\refstepcounter{Ccounter}
	\label{C:#1}
	\Clast}
}

\@ifpackageloaded{hyperref}{
	\newcommand{\Cr}[1]{\ensuremath{C_{\ref*{C:#1}}}}
}{ %
	\newcommand{\Cr}[1]{\ensuremath{C_{\ref{C:#1}}}}
}

\let\originalleft\left
\let\originalright\right
\renewcommand{\left}{\mathopen{}\mathclose\bgroup\originalleft}
\renewcommand{\right}{\aftergroup\egroup\originalright}

\newtheorem{theorem}{Theorem}
\newtheorem{lemma}[theorem]{Lemma}

\newtheorem{cor}[theorem]{Corollary}
\newtheorem{corollary}[theorem]{Corollary}

\newtheorem{definition}[theorem]{Definition}

\newcommand{\wt}{\widetilde}

\newcommand{\comp}{\ensuremath{\mathsf{c}} }
\renewcommand{\complement}{^\mathsf{c}}
\newcommand{\ind}{\mathbbm{1}}
\DeclareMathOperator{\diam}{diam}
\DeclareMathOperator{\dist}{dist}
\DeclareMathOperator{\erfc}{erfc}
\DeclareMathOperator{\Tr}{Tr}

\DeclareMathOperator{\spec}{spec}
\DeclareMathOperator{\ceil}{ceil}

\DeclareMathOperator{\supp}{supp}

\newcommand{\dd}{\,\text{\rm d}}             %

\renewcommand{\Re}{\operatorname{Re}}

\newcommand{\lis}[1]{{\ensuremath{\overline{#1}}}}

\newcommand{\R}{\ensuremath{\textup{R}} }

\newcommand{\bC}{\ensuremath{\mathbb{C}}}
\newcommand{\bE}{\ensuremath{\mathbb{E}}}

\newcommand{\bN}{\ensuremath{\mathbb{N}}}
\newcommand{\bP}{\ensuremath{\mathbb{P}}}

\newcommand{\bR}{\ensuremath{\mathbb{R}}}

\newcommand{\bZ}{\ensuremath{\mathbb{Z}}}

\newcommand{\cB}{\ensuremath{\mathcal{B}}}
\newcommand{\cC}{\ensuremath{\mathcal{C}}}

\newcommand{\cE}{\ensuremath{\mathcal{E}}}

\newcommand{\cG}{\ensuremath{\mathcal{G}}}

\newcommand{\cO}{\ensuremath{\mathcal{O}}}

\newcommand{\cU}{\ensuremath{\mathcal{U}}}

\newcommand{\cW}{\ensuremath{\mathcal{W}}}

\newcommand{\fC}{\ensuremath{\mathbf{C}}}
\newcommand{\fE}{\ensuremath{\mathbf{E}}}

\newcommand{\detS}{\det\nolimits_\Sigma}
\newcommand{\detLS}{\det\nolimits_{L\Sigma}}
\newcommand{\detsub}[1]{\det\nolimits_{#1}}

\newcommand{\email}[1]{\href{mailto:#1}{\texttt{#1}}}

\title{Discrete and zeta-regularized determinants of the Laplacian on polygonal domains with Dirichlet boundary conditions}
\author{Rafael L.\ Greenblatt\thanks{e-mail address: \email{greenblatt@mat.uniroma2.it}}
\\ SISSA, Trieste, Italy\thanks{Current affiliation: Dipartimento di Matematica, Universit\`a degli Studi di Roma ``Tor Vergata'', Rome, Italy } } 
\date{\today}

\begin{document}

\maketitle

\begin{abstract}
	For $\Pi \subset \mathbb{R}^2$ a connected, open, bounded set whose boundary is a finite union of disjoint polygons whose vertices have integer coordinates, 
	the logarithm of the discrete Laplacian on $L\Pi \cap \mathbb{Z}^2$ with Dirichlet boundary conditions has an asymptotic expression for large $L$ involving the zeta-regularized determinant of the associated continuum Laplacian.

	When $\Pi$ is not simply connected, this result extends to Laplacians acting on two-valued functions with a specified monodromy class.
\end{abstract}

\section{Introduction}
\label{sec:intro}

For a domain $\Omega \subset \bR^2$, let $\cG(\Omega)$ be the graph whose vertex set is ${\wt \Omega := \Omega \cap \bZ^2}$ and whose edge set $\cE(\Omega)$ is the set of pairs $\left\{ x,y \right\} \subset \wt \Omega$ such that the line segment $\overline{xy}$ has length one and is entirely contained in $\Omega$ (see Figure~\ref{fig:Pi_example} for examples).
The discrete Laplacian on $\Omega$ with Dirichlet boundary conditions is the operator $\wt\Delta_\Omega$ on $\ell^2(\Omega \cap \bZ^2)$ given by
\begin{equation}
	\wt\Delta_\Omega f(x)
	=
	4 f(x)
	- \sum_{\substack{y \in \bZ^2 \\ \overline{xy} \subset \Omega \\ |x-y|_1=1}} 
	f(y)
	,
	\label{eq:Lap}
\end{equation}
where $\overline{xy}$ is the closed line segment from $x$ to $y$.
Note that for bounded $\Omega$, this is a symmetric matrix with eigenvalues in $(0,4)$.
The main focus of this manuscript is studying $\det \wt\Delta_\Omega$ of the following form:  
we will fix a bounded open set $\Pi \subset \bR^2$ whose boundary is 
a disjoint union of polygons 
whose vertices are all in $\bZ^2$, and consider $\Omega = L \Pi$ for integer $L$.
\begin{figure}[h]
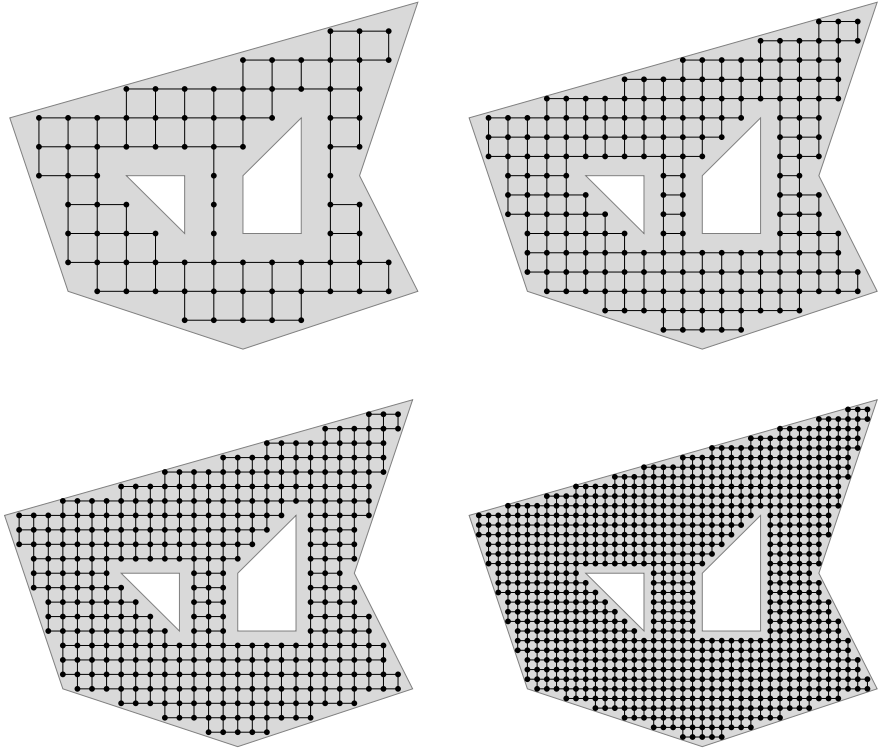

	\centering
	\begin{tabular}{cc}
		\newsavebox{\figbox}
		\begin{lrbox}{\figbox}
			\tikzpicturedependsonfile{PI2.tikz}
			\begin{tikzpicture}\draw[gray,fill=gray!30,even odd rule] (0,-1) -- (3,0) -- (2,2) -- (3,5) -- (-4,3) -- (-3,0) -- cycle
	(1,1) -- (1,3) -- (0,2) -- (0,1) -- cycle
	(-2,2) -- (-1,2) -- (-1,1) -- cycle;
\draw (-3.5,2.0) node[vertex] {};
\draw (-3.5,2.5) node[vertex] {};
\draw (-3.5,3.0) node[vertex] {};
\draw (-3.0,0.5) node[vertex] {};
\draw (-3.0,1.0) node[vertex] {};
\draw (-3.0,1.5) node[vertex] {};
\draw (-3.0,2.0) node[vertex] {};
\draw (-3.0,2.5) node[vertex] {};
\draw (-3.0,3.0) node[vertex] {};
\draw (-2.5,0.0) node[vertex] {};
\draw (-2.5,0.5) node[vertex] {};
\draw (-2.5,1.0) node[vertex] {};
\draw (-2.5,1.5) node[vertex] {};
\draw (-2.5,2.0) node[vertex] {};
\draw (-2.5,2.5) node[vertex] {};
\draw (-2.5,3.0) node[vertex] {};
\draw (-2.0,0.0) node[vertex] {};
\draw (-2.0,0.5) node[vertex] {};
\draw (-2.0,1.0) node[vertex] {};
\draw (-2.0,1.5) node[vertex] {};
\draw (-2.0,2.5) node[vertex] {};
\draw (-2.0,3.0) node[vertex] {};
\draw (-2.0,3.5) node[vertex] {};
\draw (-1.5,0.0) node[vertex] {};
\draw (-1.5,0.5) node[vertex] {};
\draw (-1.5,1.0) node[vertex] {};
\draw (-1.5,2.5) node[vertex] {};
\draw (-1.5,3.0) node[vertex] {};
\draw (-1.5,3.5) node[vertex] {};
\draw (-1.0,-0.5) node[vertex] {};
\draw (-1.0,0.0) node[vertex] {};
\draw (-1.0,0.5) node[vertex] {};
\draw (-1.0,2.5) node[vertex] {};
\draw (-1.0,3.0) node[vertex] {};
\draw (-1.0,3.5) node[vertex] {};
\draw (-0.5,-0.5) node[vertex] {};
\draw (-0.5,0.0) node[vertex] {};
\draw (-0.5,0.5) node[vertex] {};
\draw (-0.5,1.0) node[vertex] {};
\draw (-0.5,1.5) node[vertex] {};
\draw (-0.5,2.0) node[vertex] {};
\draw (-0.5,2.5) node[vertex] {};
\draw (-0.5,3.0) node[vertex] {};
\draw (-0.5,3.5) node[vertex] {};
\draw (0.0,-0.5) node[vertex] {};
\draw (0.0,0.0) node[vertex] {};
\draw (0.0,0.5) node[vertex] {};
\draw (0.0,2.5) node[vertex] {};
\draw (0.0,3.0) node[vertex] {};
\draw (0.0,3.5) node[vertex] {};
\draw (0.0,4.0) node[vertex] {};
\draw (0.5,-0.5) node[vertex] {};
\draw (0.5,0.0) node[vertex] {};
\draw (0.5,0.5) node[vertex] {};
\draw (0.5,3.0) node[vertex] {};
\draw (0.5,3.5) node[vertex] {};
\draw (0.5,4.0) node[vertex] {};
\draw (1.0,-0.5) node[vertex] {};
\draw (1.0,0.0) node[vertex] {};
\draw (1.0,0.5) node[vertex] {};
\draw (1.0,3.5) node[vertex] {};
\draw (1.0,4.0) node[vertex] {};
\draw (1.5,0.0) node[vertex] {};
\draw (1.5,0.5) node[vertex] {};
\draw (1.5,1.0) node[vertex] {};
\draw (1.5,1.5) node[vertex] {};
\draw (1.5,2.0) node[vertex] {};
\draw (1.5,2.5) node[vertex] {};
\draw (1.5,3.0) node[vertex] {};
\draw (1.5,3.5) node[vertex] {};
\draw (1.5,4.0) node[vertex] {};
\draw (1.5,4.5) node[vertex] {};
\draw (2.0,0.0) node[vertex] {};
\draw (2.0,0.5) node[vertex] {};
\draw (2.0,1.0) node[vertex] {};
\draw (2.0,1.5) node[vertex] {};
\draw (2.0,2.5) node[vertex] {};
\draw (2.0,3.0) node[vertex] {};
\draw (2.0,3.5) node[vertex] {};
\draw (2.0,4.0) node[vertex] {};
\draw (2.0,4.5) node[vertex] {};
\draw (2.5,0.0) node[vertex] {};
\draw (2.5,0.5) node[vertex] {};
\draw (2.5,4.0) node[vertex] {};
\draw (2.5,4.5) node[vertex] {};
\draw (-3.5,2.0) -- (-3.0,2.0);
\draw (-3.5,2.0) -- (-3.5,2.5);
\draw (-3.5,2.5) -- (-3.0,2.5);
\draw (-3.5,2.5) -- (-3.5,3.0);
\draw (-3.5,3.0) -- (-3.0,3.0);
\draw (-3.0,0.5) -- (-2.5,0.5);
\draw (-3.0,0.5) -- (-3.0,1.0);
\draw (-3.0,1.0) -- (-2.5,1.0);
\draw (-3.0,1.0) -- (-3.0,1.5);
\draw (-3.0,1.5) -- (-2.5,1.5);
\draw (-3.0,1.5) -- (-3.0,2.0);
\draw (-3.0,2.0) -- (-2.5,2.0);
\draw (-3.0,2.0) -- (-3.0,2.5);
\draw (-3.0,2.5) -- (-2.5,2.5);
\draw (-3.0,2.5) -- (-3.0,3.0);
\draw (-3.0,3.0) -- (-2.5,3.0);
\draw (-2.5,0.0) -- (-2.0,0.0);
\draw (-2.5,0.0) -- (-2.5,0.5);
\draw (-2.5,0.5) -- (-2.0,0.5);
\draw (-2.5,0.5) -- (-2.5,1.0);
\draw (-2.5,1.0) -- (-2.0,1.0);
\draw (-2.5,1.0) -- (-2.5,1.5);
\draw (-2.5,1.5) -- (-2.0,1.5);
\draw (-2.5,1.5) -- (-2.5,2.0);
\draw (-2.5,2.0) -- (-2.5,2.5);
\draw (-2.5,2.5) -- (-2.0,2.5);
\draw (-2.5,2.5) -- (-2.5,3.0);
\draw (-2.5,3.0) -- (-2.0,3.0);
\draw (-2.0,0.0) -- (-1.5,0.0);
\draw (-2.0,0.0) -- (-2.0,0.5);
\draw (-2.0,0.5) -- (-1.5,0.5);
\draw (-2.0,0.5) -- (-2.0,1.0);
\draw (-2.0,1.0) -- (-1.5,1.0);
\draw (-2.0,1.0) -- (-2.0,1.5);
\draw (-2.0,2.5) -- (-1.5,2.5);
\draw (-2.0,2.5) -- (-2.0,3.0);
\draw (-2.0,3.0) -- (-1.5,3.0);
\draw (-2.0,3.0) -- (-2.0,3.5);
\draw (-2.0,3.5) -- (-1.5,3.5);
\draw (-1.5,0.0) -- (-1.0,0.0);
\draw (-1.5,0.0) -- (-1.5,0.5);
\draw (-1.5,0.5) -- (-1.0,0.5);
\draw (-1.5,0.5) -- (-1.5,1.0);
\draw (-1.5,2.5) -- (-1.0,2.5);
\draw (-1.5,2.5) -- (-1.5,3.0);
\draw (-1.5,3.0) -- (-1.0,3.0);
\draw (-1.5,3.0) -- (-1.5,3.5);
\draw (-1.5,3.5) -- (-1.0,3.5);
\draw (-1.0,-0.5) -- (-0.5,-0.5);
\draw (-1.0,-0.5) -- (-1.0,0.0);
\draw (-1.0,0.0) -- (-0.5,0.0);
\draw (-1.0,0.0) -- (-1.0,0.5);
\draw (-1.0,0.5) -- (-0.5,0.5);
\draw (-1.0,2.5) -- (-0.5,2.5);
\draw (-1.0,2.5) -- (-1.0,3.0);
\draw (-1.0,3.0) -- (-0.5,3.0);
\draw (-1.0,3.0) -- (-1.0,3.5);
\draw (-1.0,3.5) -- (-0.5,3.5);
\draw (-0.5,-0.5) -- (0.0,-0.5);
\draw (-0.5,-0.5) -- (-0.5,0.0);
\draw (-0.5,0.0) -- (0.0,0.0);
\draw (-0.5,0.0) -- (-0.5,0.5);
\draw (-0.5,0.5) -- (0.0,0.5);
\draw (-0.5,0.5) -- (-0.5,1.0);
\draw (-0.5,1.0) -- (-0.5,1.5);
\draw (-0.5,1.5) -- (-0.5,2.0);
\draw (-0.5,2.0) -- (-0.5,2.5);
\draw (-0.5,2.5) -- (0.0,2.5);
\draw (-0.5,2.5) -- (-0.5,3.0);
\draw (-0.5,3.0) -- (0.0,3.0);
\draw (-0.5,3.0) -- (-0.5,3.5);
\draw (-0.5,3.5) -- (0.0,3.5);
\draw (0.0,-0.5) -- (0.5,-0.5);
\draw (0.0,-0.5) -- (0.0,0.0);
\draw (0.0,0.0) -- (0.5,0.0);
\draw (0.0,0.0) -- (0.0,0.5);
\draw (0.0,0.5) -- (0.5,0.5);
\draw (0.0,2.5) -- (0.0,3.0);
\draw (0.0,3.0) -- (0.5,3.0);
\draw (0.0,3.0) -- (0.0,3.5);
\draw (0.0,3.5) -- (0.5,3.5);
\draw (0.0,3.5) -- (0.0,4.0);
\draw (0.0,4.0) -- (0.5,4.0);
\draw (0.5,-0.5) -- (1.0,-0.5);
\draw (0.5,-0.5) -- (0.5,0.0);
\draw (0.5,0.0) -- (1.0,0.0);
\draw (0.5,0.0) -- (0.5,0.5);
\draw (0.5,0.5) -- (1.0,0.5);
\draw (0.5,3.0) -- (0.5,3.5);
\draw (0.5,3.5) -- (1.0,3.5);
\draw (0.5,3.5) -- (0.5,4.0);
\draw (0.5,4.0) -- (1.0,4.0);
\draw (1.0,-0.5) -- (1.0,0.0);
\draw (1.0,0.0) -- (1.5,0.0);
\draw (1.0,0.0) -- (1.0,0.5);
\draw (1.0,0.5) -- (1.5,0.5);
\draw (1.0,3.5) -- (1.5,3.5);
\draw (1.0,3.5) -- (1.0,4.0);
\draw (1.0,4.0) -- (1.5,4.0);
\draw (1.5,0.0) -- (2.0,0.0);
\draw (1.5,0.0) -- (1.5,0.5);
\draw (1.5,0.5) -- (2.0,0.5);
\draw (1.5,0.5) -- (1.5,1.0);
\draw (1.5,1.0) -- (2.0,1.0);
\draw (1.5,1.0) -- (1.5,1.5);
\draw (1.5,1.5) -- (2.0,1.5);
\draw (1.5,1.5) -- (1.5,2.0);
\draw (1.5,2.0) -- (1.5,2.5);
\draw (1.5,2.5) -- (2.0,2.5);
\draw (1.5,2.5) -- (1.5,3.0);
\draw (1.5,3.0) -- (2.0,3.0);
\draw (1.5,3.0) -- (1.5,3.5);
\draw (1.5,3.5) -- (2.0,3.5);
\draw (1.5,3.5) -- (1.5,4.0);
\draw (1.5,4.0) -- (2.0,4.0);
\draw (1.5,4.0) -- (1.5,4.5);
\draw (1.5,4.5) -- (2.0,4.5);
\draw (2.0,0.0) -- (2.5,0.0);
\draw (2.0,0.0) -- (2.0,0.5);
\draw (2.0,0.5) -- (2.5,0.5);
\draw (2.0,0.5) -- (2.0,1.0);
\draw (2.0,1.0) -- (2.0,1.5);
\draw (2.0,2.5) -- (2.0,3.0);
\draw (2.0,3.0) -- (2.0,3.5);
\draw (2.0,3.5) -- (2.0,4.0);
\draw (2.0,4.0) -- (2.5,4.0);
\draw (2.0,4.0) -- (2.0,4.5);
\draw (2.0,4.5) -- (2.5,4.5);
\draw (2.5,0.0) -- (2.5,0.5);
\draw (2.5,4.0) -- (2.5,4.5);\end{tikzpicture}
		\end{lrbox}
		\resizebox{0.4\textwidth}{!}{\usebox\figbox}
		&
		\begin{lrbox}{\figbox}
			\tikzpicturedependsonfile{PI3.tikz}
			\input{PI3.tikz}
		\end{lrbox}
		\resizebox{0.4\textwidth}{!}{\usebox\figbox}
		\\
		\\
		\begin{lrbox}{\figbox}
			\tikzpicturedependsonfile{PI4.tikz}
			\input{PI4.tikz}
		\end{lrbox}
		\resizebox{0.4\textwidth}{!}{\usebox\figbox}
		&
		\begin{lrbox}{\figbox}
			\tikzpicturedependsonfile{PI6.tikz}
			\input{PI6.tikz}
		\end{lrbox}
		\resizebox{0.4\textwidth}{!}{\usebox\figbox}
	\end{tabular}
	\caption{Example of a shape $\Pi$ and the graphs $\cG(L\Pi)$ for different integer values of $L$ (not to scale).}
	\label{fig:Pi_example}
\end{figure}
Throughout this work, I will use $\Omega$ for a general domain and $\Pi$ for a bounded domain of this polygonal form.

In fact I will consider a somewhat more general Laplacian, sometimes called the twisted Laplacian.  
For an assignment of $\rho_{xy} \in \bC$ to each $x, y \in \Omega \cap \bZ^2$ with $|x-y|_1 = 1$ such that $\rho_{xy} = 1/\rho_{yx}$, we can define the discrete scalar Laplacian on $\Omega$ with Dirichlet boundary conditions and connection $\rho$ as
\begin{equation}
	\wt\Delta_{\Omega,\rho} f(x)
	=
	4 f(x)
	- \sum_{\substack{y \in \bZ^2 \\ \overline{xy} \subset \Omega \\  |x-y|_1=1}} 
	\rho_{xy}
	f(y).
	\label{eq:rho_lap}
\end{equation}
Writing out $\det \wt \Delta_{\Omega,\rho}$ by the Leibniz formula, $\rho$ appears in the form of products 
\begin{equation}
	\prod_{j=1}^n
	\rho_{y_j,y_{j+1}}
	\label{eq:homotopy}
\end{equation}
where $y_1,\dots,y_{n+1}$ is a sequence which has no repeated elements except $y_{n+1} \equiv y_1$.  These sequences can be naturally identified with simple closed curves; I will call the associated products monodromy factors.
I will consider connections with the following property: for some finite $\Sigma \subset \Omega\complement$, the monodromy factor is $-1$ if the associated closed curve winds around an odd number of elements of $\Sigma$ and $1$ otherwise.
From the above considerations, $\det \wt \Delta_{\Omega,\rho}$ is the same for all $\rho$ with this property for the same $\Sigma$, so we can let
\begin{equation}
	\detS \wt \Delta_\Omega
	:=
	\det \wt \Delta_{\Omega,\rho}
	\label{eq:det_Sig}
\end{equation}
for an arbitrary chosen such connection; it is easy to construct such a $\rho$ for any $\Sigma$.
In fact in this way we obtain all of the connections which are ``flat'' (that is, for which the monodromy factor is $1$ for all contractible curves) and where the monodromy factors take values only in $\{\pm 1\}$.

These determinants are the partition functions of certain sets of essential cycle-rooted spanning forests \cite{Ken11}.  
In the simply connected case (where $\Sigma$ plays no role) the set of forests in question is simply the set of spanning trees on the graph formed by adding a ``giant'' vertex to $\cG(\Omega)$ which is connected to all the vertices on the boundary, and this identity is a restatement of the Kirchoff matrix-tree theorem.
In this case, and also for the case when $\Omega$ is doubly connected and $\Sigma$ consists of a single point in the finite component of $\Omega\complement$, this partition function is in turn equal to the number of perfect matchings (configurations of the dimer model) on a ``Temperleyan'' graph constructed from $\cG(\Omega)$; however for higher-genus planar domains this correspondence involves a set of forests which has a different characterization \cite{BLR}.

\begin{figure}[h]
	\centering
	\begin{lrbox}{\figbox}
		\begin{tikzpicture}[xscale=1.4,yscale=1.4]
			\draw[gray,fill=gray!30,even odd rule] (0,-1) -- (3,0) -- (2,2) -- (3,5) -- (-4,3) -- (-3,0) -- cycle
				(1,1) -- (1,3) -- (0,2) -- (0,1) -- cycle
				(-2,2) -- (-1,2) -- (-1,1) -- cycle;
			\draw (-1.25,1.75) node [inner sep = 1pt, circle, fill=black,label=below:{$\sigma_1$}] {}
			(0.5,2) node  [inner sep=1pt, circle, fill=black,label=below:{$\sigma_2$}] {};
			\draw [dashed] plot [smooth cycle,tension=1] coordinates {(-1.7,1) (-0.8,1) (-0.8,2.2) (-2.5,2)}
				plot [smooth cycle] coordinates {(-0.2,0.7) (-0.3,2.4) (1.1,3.2) (1.2,0.8)};
			\draw plot [smooth cycle] coordinates {(-2.3,0.7) (-2.7,2.2) (1.3,3.4) (1.2,0.5)}
			plot [smooth cycle,tension=1.5] coordinates {(-1,-0.1) (0,-0.6) (0.5,-0.2)}
				plot [smooth cycle,tension=1] coordinates {(2.2,3.8) (2.6,4.5) (2.0,4.4) (1.2,3.8)};
		\end{tikzpicture}
	\end{lrbox}
	\resizebox{0.7\textwidth}{!}{\usebox\figbox}
	\caption{Example of a region $\Pi$ and set $\Sigma=\left\{ \sigma_1,\sigma_2 \right\}$ used to specify monodromy factors; in this example the solid curves have monodromy 1 and the dashed curves have monodromy $-1$, independent of their orientation.}
	\label{fig:Pi_Sigma}
\end{figure}
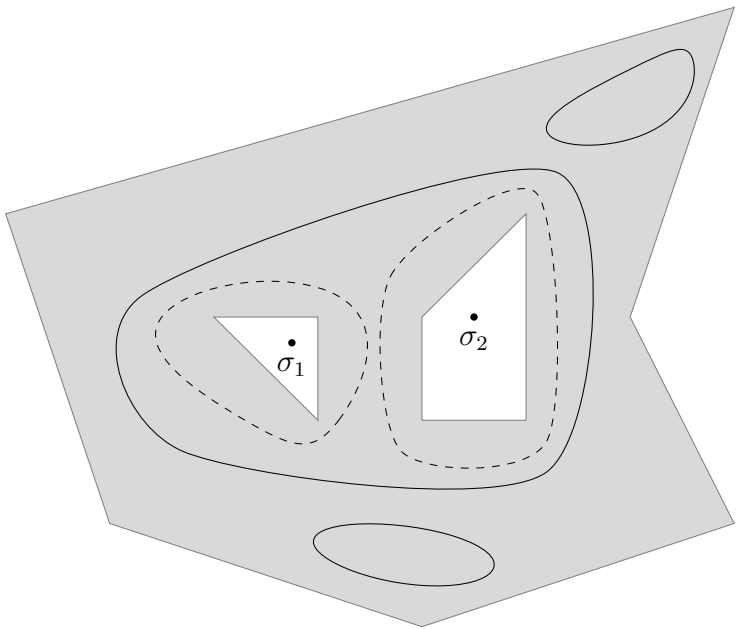

The main result of the present paper is the following:
\begin{theorem}
		\begin{equation}
			\log \detLS \wt\Delta_{L \Pi}
			= 
			\begin{aligned}[t]
				&
				\# \left((L \Pi) \cap \bZ^2 \right) \alpha_0
				+
				\sum_{e \in \fE(L\Pi)} \alpha_1(e) |e|
				+
				\sum_{c \in \fC(L\Pi)} \alpha_2(c)
				\\ & 
				+
				\alpha_3 (\Pi) \log L
				+
				\alpha_4(\Pi,\Sigma)
				+
				\cO\left( \frac{[\log L]^{23/14}}{L^{2/7}} \right)
			\end{aligned}
			\label{eq:main}
		\end{equation}
		as $L \to \infty$, where
		\begin{enumerate}
			\item $\#$ denotes the cardinality;
			\item $\fE(L\Pi)$ is the set of edges of $L\Pi$, $|e|$ is the length of the edge, and $\alpha_1(e)$ depends only in the slope of $e$ 
			\item $\fC(L\Pi)$ is the set of vertices (corners) of $L\Pi$, and $\alpha_2(c)$ depends only on the opening angle of the corner and its orientation relative to the coordinate axes, or in other words the slopes of the edges incident to $c$ and whether the associated interior angle is acute or obtuse;
			\item $\alpha_3 (\Pi)$ is the sum over corners of $\Pi$ of $(\pi^2 - \theta^2)/12\pi \theta$ where $\theta$ is the angle measured in the interior of the corner;
			\item $\alpha_4 (\Pi,\Sigma)$ is the logarithm of the $\zeta$-regularized determinant of the continuum Laplacian on a double cover of $\Pi$ branched around $\Sigma$ corresponding to the continuum limit of  $\wt \Delta_{L\Pi}$, defined in \cref{sec:zeta} below.
		\end{enumerate}
	\label{thm:main}
\end{theorem}
It is known from calculations using other techniques that $\alpha_0 = 4 G /\pi$, where $G$ is Catalan's constant; this can be verified from the expression in \cref{eq:alpha0_def} below, but the details are unedifying so I will not include them here.  As far as I know, $\alpha_1$, which is defined in \cref{eq:alpha1_def}, admits such an explicit expression only in special cases (for example it follows from \cite[Equation~(4.23)]{DD88} that $\alpha_1(e) = -\tfrac12 \log(\sqrt2 +1)$ when $e$ is parallel to one of the coordinate axes).

Note that \cref{eq:main} is written as a mixture of terms involving the geometry of $L\Pi \cap \bZ^2$ as a graph with a boundary and terms involving the limiting shape $\Pi$, rather than an expansion in orders of $L$.  %
In fact the first term on the right hand side of \cref{eq:main} is quadratic (by Pick's theorem), but with linear and constant terms which are generically nonzero and have no particular relationship to the other terms in the series, and $\sum_C \alpha_2(C)$ is independent of $L$.  As a result, while \cref{thm:main} implies that 
\begin{equation}
	\log \detLS \wt\Delta_{L \Pi}
	= 
	\begin{aligned}[t]
		&
		\alpha_0
		L^2
		+
		\alpha_1' (\Pi) L
		+
		\alpha_3 (\Pi) \log L
		+
		\alpha_4'(\Pi,\Sigma)
		+
		\cO\left( \frac{[\log L]^{23/14}}{L^{2/7}} \right)
		,
	\end{aligned}
	\label{eq:by_orders}
\end{equation}
the constant term in this expression is not generically given by the logarithm of the zeta-regularized determinant without further corrections.

\medskip

The order of the error term is mainly fixed by the estimates in \cref{sec:convergence_to_continuous,sec:discrete_ref} on the speed of convergence of the discrete heat kernel with Dirichlet boundary conditions to its continuum counterpart; the peculiar dependence is on $L$ is a result of combining a number of bounds which are helpful in different regimes and are clearly not themselves optimal.  More concretely, the dominant contribution to the error term is expressed in terms of the probability of a rescaled Brownian bridge started at a point $x \in \Pi$ coming within a small distance $\delta$ of the boundary of $\Pi$ without going further than $\delta$ outside of $\Pi$; a more careful analysis of this problem would presumably lead to an improvement of the estimate in \cref{eq:main} and perhaps a modest simplification of the proof. 
If the convergence of the heat kernel were as fast as in the full-plane case ($\cO(1/n^2$), the error term would be improved to
\begin{equation}
	\cO\left( L^2 \int_{L^2/(\log L)^\C}^\infty \frac{\dd t}{t^3} \right)
	=
	\cO\left( \frac{\Clast^2}{L^2} \right)
	\label{eq:optimistic_error}
\end{equation}
which is probably the highest precision which can be studied by the methods presented here.

\medskip
The $\zeta$-regularized determinant of the Dirichlet Laplacian is a very well-studied object, not least because it gives a way of defining a (finite) partition function for free quantum field theories and has a very accessible relationship to the geometry of the domain on which it is defined; it has consequently been used to study the Casimir effect \cite{DC.zeta}, the effects of space-time curvature \cite{Hawking}, and the definition of conformal field theories \cite{Gawedzki.lectures}.
In the case where $\Pi$ is simply connected (i.e.\ a polygon) it has been described in great detail in \cite{AS.polygon_zeta}.
Among other things, the fact that $\alpha_3$ is the logarithm of such a determinant implies that $\alpha_3 (L\Pi,L\Sigma) = \alpha_2 (\Pi) \log L + \alpha_3(\Pi,\Sigma)$ (see \cref{eq:zeta_rescaling}), so \cref{eq:main} is equivalent to
\begin{equation}
	\log \detLS \wt\Delta_{L \Pi}
	= 
	\begin{aligned}[t]
		&
		\# \left((L \Pi) \cap \bZ^2 \right) \alpha_0
		+
		\sum_{e \in \fE(L\Pi)} \alpha_1(e) |e|
		+
		\sum_{c \in \fC(L\Pi)} \alpha_2(c)
		\\ & 
		+
		\alpha_4(L \Pi,L \Sigma)
		+
		\cO\left( \frac{[\log L]^{23/14}}{L^{2/7}} \right)
		,
	\end{aligned}
	\label{eq:main_alt}
\end{equation}
which I in fact prove first.

The form of this expression is related to the fact that $\alpha_3$ and $\alpha_4$ can be defined in terms of the continuum Laplacian; on the other hand $\alpha_0$ and $\alpha_1$ are defined in terms of random walks on $\bZ^2$ and so should depend on the details of the graph and (for $\alpha_1$) the details of how the discrete boundary is chosen; $\alpha_2$ involves both discrete and continuous quantities.
The leading term $\alpha_0$, which is unambiguously defined and explicit, is known to correspond to the choice of $\bZ^2$ as the underlying infinite graph.
The boundary term is generically less explicit, but it is a sum of local terms which depend only on the local shape of the boundary.

One consequence of \cref{eq:main} is that since $\alpha_4$ is the first term which depends on $\Sigma$, for two different $\Sigma$ on the same $\Pi$ 
\begin{equation}
	\frac{\detsub{L \Sigma_1} \wt \Delta_{L \Pi}}{\detsub{L\Sigma_2} \wt \Delta_{L \Pi}}
	=
	\exp\left( \frac{\alpha_4 (\Pi,\Sigma_1)}{\alpha_4 (\Pi,\Sigma_2)}  \right)
	+o(1)
	, \quad
	L \to \infty
	;
	\label{eq:Z_ratio}
\end{equation}
a similar result was shown for a variety of toric graphs (not necessarily $\bZ^2$, or even periodic) in \cite{DubGhe}.

\medskip

Expansions to this order for determinants of Laplacians (or dimer partition functions, or Ising model partition functions, which are closely related under suitable conditions) have now been studied for several decades.  The presence of a term of constant order is interesting for applications to statistical physics, for example since it is related to the notion of central charge in conformal field theory \cite{Cardy.universal,Affleck.universal}.

When comparing these results it is important to note that boundary conditions (which, in the dimer model, are expressed in details of the geometry chosen) can have dramatic effects.
For example, the first publication containing asymptotic formulas of this type \cite{Ferdinand67} considered the partition function of the dimer model on a discrete rectangle, which is not a Temperleyan graph, and so does not correspond to Dirichlet boundary conditions for the Laplacian; consequently the expansion obtained there (which, in particular, has no logarithmic term) is not a special case of \cref{thm:main}, although it can be placed in a common framework.

The aforementioned publication by Ferdinand in 1967 studied the discrete torus as well as a rectangle, and noted the presence of a term of constant order expressed in terms of special functions.  This expansion started from the expression of the partition function as a Pfaffian (square root of the determinant) or linear combination of Pfaffians of matrices which had been explicitly diagonalized, and then carried out an asymptotic expansion using the Euler-Maclaurin formula.
Subsequently this approach was extended to other geometries, such as the cylinder and Klein bottle, and to obtain the asymptotic series to all orders in the domain size \cite{IOH03,BEP}.

The first work to consider a case included in \cref{thm:main} was
\cite[in particular Equation~(4.23)]{DD88}, who examined the determinant of the Laplacian on a rectangular lattice with Dirichlet boundary conditions (via the product of the nonzero eigenvalues of the Laplacian with Neumann boundary conditions on the dual lattice, adapting an earlier calculation \cite{Barber70}, which was the first to include a logarithmic term); this was the first work to explicitly identify part of the expansion of the discrete Laplacian determinant with a zeta-regularized determinant.
Subsequently, Kenyon \cite{Ken00} used the result of this calculation (which was based on an exact diagonalization of the Laplacian), combined with a formula using dimer correlation functions to relate partition functions on different domains, to obtain a result for rectilinear polygons, that is, simply connected domains with sides parallel to the axes of $\bZ^2$.  
Kenyon gives an alternative characterization of the term $\alpha_3$, related to the limiting average height profile of the associated dimer model; recently Finski \cite{Finski} showed that this is in fact the logarithm of the zeta-regularized determinant in a work which also generalized the result to, among other things, nonsimply-connected domains (including nontrivial connection), albeit still with the restriction that the boundary of the domain should always be parallel to one of the coordinate axes.

While revising this manuscript I became aware of a recent work by Izyurov and Khristoforov \cite{IK} who, using an approach which overlaps to a large extent with my own, obtain a comparable result including some cases which are excluded by \cref{thm:main} and excluding some of the cases I consider.  They consider a subset of an infinite periodic locally-planar graph which is invariant under reflections (not necessarily $\bZ^2$) and allow for Neumann or mixed Dirichlet-Neumann boundary conditions, as well as allowing the boundary to include punctures or slits.  However, they consider only boundary components lying on lines of symmetry of the underlying graph, which (since the graph is periodic) means that the corner angles are restricted to be multiples of $\pi/2$ or $\pi/3$.

In the last two decades there have also been a number of works 
relating the determinant of the Laplacian on higher-dimensional analogs of the discrete torus or rectangle \cite{CJK10,Vertman,HK20} to the corresponding zeta-regularized determinants.  
Although these authors introduced ideas which illuminate several aspects of the problem and make it possible to treat arbitrary dimension, they only treated graphs for which the spectrum of the Laplacian is given quite explicitly (although \cite{CJK10} does not use this directly, their treatment is based on the fact that the discrete torus is a Cayley graph, which is essentially what allows the spectrum to be calculated so explicitly).
For some self-similar fractals, the problem has been approached using recursive relationships which characterize the spectrum less explicitly \cite{CTT18}, but this leads us into a rather different context.

\medskip
As in \cite{CJK10}, the proof of \cref{thm:main} is based on the relationship between the (matrix or zeta-regularized) determinant and the trace of the heat kernel, reviewed in \cref{sec:zeta}; however, once this relationship is introduced, I (like the authors of \cite{IK}) treat the heat kernel using only a probabilistic representation, as the transition probability of a random walk or Brownian motion.
In \cref{sec:hk}, I introduce this representation (which is nonstandard only because of the presence of nontrivial monodromy factors) and present some straightforward bounds on the effect of changes in the domain $\Omega$ and the set $\Sigma$ which encodes the monodromy.
Such bounds can be used to obtain an expansion for the behaviour of the trace of the continuum heat kernel at small time (small, that is, on the scale associated with the domain); in \cref{sec:ref_geom}, I review the proof of this expansion given in \cite{Kac66}, which also provides the occasion to introduce certain definitions which will be used again later on to identify corresponding contributions in the discrete heat kernel.
For times which are large on the lattice scale, the trace of the discrete heat kernel converges to that of the corresponding continuum heat kernel; in \cref{sec:convergence_to_continuous} I give quantitative estimates on this convergence, based on the dyadic approximation.
Then in \cref{sec:discrete_ref} I use a combination of the methods of the previous sections to control the behavior of the lattice heat kernel in a regime corresponding to that of the expansion for its continuum counterpart considered in \cref{sec:ref_geom}.
Finally, in \cref{sec:conclusion} I combine all of these estimates to conclude the proof of \cref{thm:main}.

\medskip

\Cref{thm:main} is limited to (1) planar regions, (2) polygons whose corners have integer coordinates, (3) Dirichlet boundary conditions, and (4) subsets of the square lattice, rather than more general planar graphs; let me comment on these limitations.
\begin{enumerate}
	\item \label{it:shape}
		There does not seem to be any particular difficulty in extending the result to higher-dimensional polytopes, should such a result be of interest; however several of the motivations for studying the problem are particular to the two-dimensional case.  Similarly, it is quite straightforward to generalize this treatment to a torus, cylinder, Klein bottle, or M\"obius strip, but these cases are all already well understood by other means \cite{IOH03,BEP,KSW.tori,Cimasoni.Klein}.
		Domains with punctures or slits \cite{KW20} are presumably accessible with some technical improvements; in fact these have already been treated by a slightly different approach \cite{IK}.
	\item The restriction to this class of polygonal domains is related to the construction of the term involving the ``surface tension'' $\alpha_1$ in \cref{eq:main}.  
		Near the edges of the domain and away from the corners (the situation at the corners is more complicated, but in the end the same considerations come into play), the domain coincides with an infinite half-plane in regions of size proportional to $L$, so that the heat kernel there is very well approximated by the heat kernel on the intersection of $\bZ^2$ with that half plane.  The construction of $\alpha_1$, given in \cref{sec:discrete_ref}, takes advantage of this, and in particular the fact that, because the slope of the boundary of the half-plane is rational, the result is a (singly\nobreakdash-)periodic graph.  
	\item \label{it:bc} The discrete heat kernel in Dirichlet boundary conditions has a particularly convenient probabilistic representation (see \cref{sec:hk} for details): 
		for a trivial connection ($\Sigma = \emptyset$), the diagonal elements are simply the probabilities that a standard random walk on $\bZ^2$ returns to its starting point after a specified number of steps without ever having left the domain.
		This characterization has many helpful properties (most immediately, it is strictly monotone in the choice of domain).  
		For a general connection, it is the expectation value of a function which is the monodromy factor of the trajectory of the random walk if it returns to its starting point without leaving the domain and zero otherwise, which is obviously bounded in absolute value by the probability appearing in the case of a trivial connection.
		The continuum heat kernel has a similar representation in terms of standard two-dimensional Brownian motion.

		This is the key difference between the my approach and that of \cite{IK}.  They are able to treat Neumann and, more importantly (since there is a duality relating Neumann and Dirichlet boundary conditions) \emph{mixed} boundary conditions by coupling the random walk under consideration to one on an infinite graph by reflecting the walk when it encounters a boundary component with Neumann boundary conditions.  This requires, however, that the infinite graph be invariant under these reflections, which further limits the shapes of the domain which can be considered as mentioned above.
	\item The restriction to the square lattice is most relevant for the technique I use to obtain quantitative control on the approximation between the discrete and continuum heat kernels in \cref{sec:convergence_to_continuous}.
		I use an approach based on the dyadic coupling of Komlós, Major and Tusnády (KMT) \cite{KMT75}, which is essentially a one-dimensional technique, but which can be applied to random walks on $\bZ^d$ and a few other graphs where the coordinates of the walker evolve independently.
		In fact the KMT coupling can be used to obtain a similar control on the convergence of sums of I.I.D.\ sequences of vectors \cite{Einmahl}, and thus random walks on infinite periodic graphs, as in \cite{IK}.  This is consistent with the  similar expansions obtained in the last few years for periodic graphs embedded in the torus \cite{KSW.tori} or Klein bottle \cite{Cimasoni.Klein}, which have a similar structure, including a constant term which is universal (i.e.\ independent of many of the details of exactly which graph is chosen).

		It remains unclear whether a similar approach can be carried through for non-periodic graphs, which are needed to approximate many non-planar geometries.
		For graphs embedded in the torus, the asymptotic ratio corresponding to \cref{eq:Z_ratio} is known to be the same for all graphs on which the simple random walk converges weakly to Brownian motion (a weaker notion than the one used here) \cite{DubGhe}, raising the question of whether a similar result holds for other geometries. 
		\label{it:KMT}
\end{enumerate}

\section{Heat kernels, zeta functions, and determinants}
\label{sec:zeta}

For $M$ a symmetric $m \times m$ matrix with positive eigenvalues $0 < \mu_1 \le \mu_2 \le \dots, \mu_m$, 
we define an entire function by
\begin{equation}
	\zeta_M(s)
	:=
	\sum_{j=1}^m
	\mu_j^{-s}
	;
	\label{eq:zeta_M_def}
\end{equation}
this definition has the consequence that
\begin{equation}
	\zeta_M' (0) =
	-
	\sum_{j=1}^m
	\log \mu_j
	=
	- \log \det M
	\label{eq:zeta_det_log}
\end{equation}
or equivalently $\det M = \exp\left( - \zeta_M'(0) \right)$; 
to evaluate this, we note that for $\Re s > 0$
\begin{equation}
	\zeta_M (s)
	=
	\sum_{j=1}^m
	\frac{1}{\Gamma(s)}
	\int_0^\infty 
	t^{s-1}
	e^{- t \mu_j}
	\dd t
	=
	\frac{1}{\Gamma(s)}
	\int_0^\infty 
	t^{s-1}
	\Tr e^{-t M}
	\dd t
	\label{eq:zeta_M_integral}
\end{equation}
where $\Gamma(s) := \int_0^\infty t^{s-1} e^{-t} \dd t$,
and consequently
\begin{equation}
	\zeta_M'(s)
	=
	\frac{1}{\Gamma(s)}
	\int_0^\infty 
	t^{s-1}
	\log t
	\Tr e^{-t M}
	\dd t
	-
	\frac{\psi(s)}{\Gamma(s)}
	\int_0^\infty 
	t^{s-1}
	\Tr e^{-t M}
	\dd t
	,
	\label{eq:zeta_M_prime_integral}
\end{equation}
where $\psi(s) := \Gamma'(s)/\Gamma(s)$.
As $s\to 0^+$, the two integrals become divergent at $t = 0$, where $\Tr e^{-tM} \to \Tr I$; however these divergences cancel: we have
\begin{equation}
	\frac{1}{\Gamma(s)}
	\int_0^{e^{-\gamma}} t^{s-1} \left[ \log t - \psi(s) \right] \dd t
	=
	\frac{1}{\Gamma(s)}
	\left[ \frac{\left( -1-\gamma s \right)e^{-\gamma s}}{s^2} - \frac{\psi(s) e^{-\gamma s}}{s} \right]
	=
	\cO(s)
\end{equation}
for $s \to 0^+$, since $1/\Gamma(s) = \cO(s)$ and $s \psi(s) = -1 - \gamma(s) + \cO(s^2)$ \cite[eq.~5.7.1,~5.7.6]{DLMF}, where $\gamma$ is the Euler-Mascheroni constant; note that the choice of $e^{-\gamma}$ as the upper limit of integration gives the best possible cancellation, which is not strictly necessary but simplifies the resulting expressions.
Since $\Tr \left( e^{-tM} - I \right) = \cO(t)$ for $t \to 0$, the remaining parts of the integrals on the right hand side of \cref{eq:zeta_M_prime_integral} are convergent, leaving
\begin{equation}
	\zeta_M'(0)
	=
	\lim_{s \to 0^+}
	\zeta_M'(s)
	=
	\int_0^{e^{-\gamma}} 
	\Tr 
	\left( e^{-tM} - I \right) 
	\frac{\dd t}{t}
	+
	\int_{e^{-\gamma}}^\infty
	\Tr e^{-tM}
	\frac{\dd t}{t}
	.
	\label{eq:zeta_prime_M}
\end{equation}

In our case, we are therefore interested in $\Tr e^{-t \wt\Delta_{\Omega,\rho}}$, which is related to a continuous time random walk.  It will mostly be easier to consider the corresponding discrete-time random walk, which can be obtained through the following manipulations.
Noting
\begin{equation}
	\Tr e^{-t \wt\Delta_{\Omega,\rho}}
	=
	e^{-4 t} \Tr e^{t \left( 4 I - \wt\Delta_{\Omega,\rho} \right)}
	=
	\sum_{n=0}^\infty \frac{e^{-4 t }(4t)^n}{n!} 
	\Tr (I-\tfrac14 \wt\Delta_{\Omega,\rho})^n
	,
\end{equation}
where $I$ is the identity matrix on $\Omega \cap \bZ^2$,
\cref{eq:zeta_prime_M} becomes
\begin{equation}
	\begin{split}
		\zeta_{\wt\Delta_{\Omega,\rho}}' (0)
		&
		=
		\begin{aligned}[t]
			&
			\int_0^\infty \left( e^{-4 t} - \ind_{[0,e^{-\gamma}]}(t) \right) \frac{\dd t}{t}
			\Tr I
			\\ & \qquad
			+ \sum_{n=1}^\infty
			\frac{4}{n!}
			\int_0^\infty 
			e^{-4t} (4t)^{n-1}
			\dd t
			\Tr (I-\tfrac14 \wt\Delta_{\Omega,\rho})^n
		\end{aligned}
		\\ & = 
		- (\log 4) \# \left( \Omega \cap \bZ^2 \right) 
		+
		\sum_{n=1}^\infty
		\frac1{2n} 
		\Tr (I-\tfrac14 \wt\Delta_{\Omega,\rho})^{2n}
	\end{split}
	\label{eq:zeta_M_sum}
\end{equation}
(note that in the last step I have omitted the traces of odd powers, which vanish).

To obtain comparable continuum entities, let $\Pi_\Sigma$ denote a double cover of $\Pi$ branched around each point in $\Sigma$, and let $\Delta_{\Pi,\Sigma}$ be the Laplacian with Dirichlet boundary conditions acting on antisymmetric functions on $\Pi_\Sigma$, as detailed in \cref{app:DC}.
For consistency with \cref{eq:Lap} we use the positive Laplacian $\Delta = - \frac{\partial^2}{\partial x ^2} - \frac{\partial^2}{\partial y^2}$, giving a sign change and a factor of two with respect to many of the references cited.

The spectrum of $\Delta_{\Pi,\Sigma}$ is discrete and bounded from below, and its density of states is asymptotically approximately linear (Weyl's law\footnote{%
		More precisely, in two dimensions Weyl's law states that the number $N(\lambda)$ of eigenvalues of the Laplacian $\Delta_\Omega$ less than or equal to $\lambda$ (with multiplicity) satisfies
		\begin{equation}
			\nonumber
			\lim_{\lambda \to \infty}
			\frac{N(\lambda)}{\lambda}
			=
			\frac{|\Omega|}{2\pi}
			.
		\end{equation}
		It is usually stated for the Laplacian on functions on planar domains, but the well-known variational proof \cite[Section~6.4]{CourantHilbert1} also applies to functions on double covers.
}),
so the zeta function
\begin{equation}
	\zeta_{\Delta_{\Pi,\Sigma}}(s)
	:=
	\sum_{\lambda \in \spec \Delta_{\Pi,\Sigma}}
	\lambda^{-s}
	\label{eq:zeta_Delta_def}
\end{equation}
is well defined and analytic for $\Re s >1$; as we shall shortly see, $\zeta_{\Delta_{\Pi,\Sigma}}$ can be analytically continued to a neighborhood of 0, and -- by analogy with \cref{eq:zeta_det_log} --  $e^{-\zeta_{\Delta_{\Pi,\Sigma}}'(0)}$ is called the \emph{$\zeta$-regularized determinant} of the Laplacian.

The analogues of \cref{eq:zeta_M_integral,eq:zeta_M_prime_integral} hold for $\Re s > 1$ (as we shall see in \cref{thm:DC_trace} below, $e^{-t \Delta_{\Pi,\Sigma}}$ is trace class for $t > 0$); the trace appearing in the integral has has a noteworthy asymptotic expansion~\cite{BB62}
\begin{equation}
	\Tr e^{-t \Delta_{\Pi,\Sigma}}
	=
	a_0(\Pi) t^{-1}
	+
	a_1(\Pi) t^{-1/2}
	+
	a_2(\Pi)
	+
	\cO(e^{-\kappa^2(\Pi)/t})
	, \quad t \to 0 +
	\label{eq:ht_asympt}
\end{equation}
where $a_0(\Pi)$ is $1/4\pi$ times the area of $\Pi$, $a_1(\Pi)$ is $-1/8\sqrt{\pi}$ times the length of the boundary, and $a_2(\Pi)$ is the sum over corners of $\left( \pi^2 - \theta^2 \right)/24\pi\theta$ where $\theta$ is the opening angle measured from the interior of $\Pi$\footnote{The first explicit formula for $a_2$ appeared in \cite{Kac66}; this simpler version, due to {D.\ B.\ Ray}, was first published in \cite{MS67.eig}.
}; an expansion for smooth manifolds with smooth boundaries (or no boundaries) is also known \cite{MS67.eig,Gilkey}.
Although \cite{BB62,Kac66,MS67.eig} were formulated for the plane Laplacian or Laplace-Beltrami operator (equivalent to $\Sigma = \emptyset$), the proof of \cref{eq:ht_asympt} in \cite{Kac66} extends straightforwardly to the dual cover.  I will present a version of this proof in \cref{sec:ref_geom}, both for completeness and because some of the entities introduced there are helpful for comparison with the discrete case.

$\Tr e^{-t \Delta_{\Pi,\Sigma}}$ also decreases exponentially as $t \to \infty$, since the spectrum of $\Delta_{\Pi,\Sigma}$ is bounded away from 0 (see \cref{app:DC}, especially \cref{eq:D_domain_monotone}), 
and so letting 
\begin{equation}
	F_{\Pi,\Sigma}(t) 
	:=  
	\Tr e^{-t \Delta_{\Pi,\Sigma}} 
	-   
	a_0(\Pi) t^{-1}
	-
	a_1(\Pi) t^{-1/2}
	-
	a_2(\Pi) \ind_{[0,e^{-\gamma}]}(t)
	\label{eq:F_Pi_def}
\end{equation}
we obtain, for any $K > e^{-\gamma}$ and $\Re s > 1$,
\begin{equation}
	\begin{split}
		\int_0^K
		t^{s-1}
		&
		\Tr e^{-t \Delta_{\Pi,\Sigma}} 
		\dd t
		-
		\int_0^K
		t^{s-1}
		F_{\Pi,\Sigma}(t)
		\dd t
		\\ = &
		\int_0^K
		\left( a_0(\Pi) t^{s-2} + a_1(\Pi) t^{s-3/2} \right)
		\dd t
		+
		a_2(\Pi) \int_0^{e^{-\gamma}} t^{s-1} \dd t
		\\ = &
		a_0(\Pi)
		\frac{K^{s-1}}{1-s}
		+
		a_1(\Pi)
		\frac{K^{s-1/2}}{\tfrac12 - s}
		+
		\frac{a_2(\Pi)}{s} e^{-\gamma s}
		;
	\end{split}
\end{equation}
we thus have 
\begin{equation}
	\begin{split}
		\Gamma(s) 
		\zeta_{\Delta_{\Pi,\Sigma}} (s)
		=&
		\int_K^\infty
		t^{s-1}
		\Tr e^{-t \Delta_{\Pi,\Sigma}} 
		\dd t
		+
		\int_0^K
		t^{s-1}
		F_{\Pi,\Sigma}(t)
		\dd t
		+
		a_0(\Pi)
		\frac{K^{s-1}}{1-s}
		\\ &
		+
		a_1(\Pi)
		\frac{K^{s-1/2}}{\tfrac12 - s}
		+
		\frac{a_2(\Pi)}{s}
		e^{-\gamma s}
		,
	\end{split}
\end{equation}
and we can analytically continue and then take the limit $K \to \infty$ to obtain
\begin{equation}
	\zeta_{\Delta_{\Pi,\Sigma}} (s)
	=
	\frac{1}{\Gamma(s)}
	\left[ 
		\int_0^\infty
		t^{s-1}
		F_{\Pi,\Sigma}(t)
		\dd t
		+
		\frac{a_2(\Pi)}{s}
		e^{-\gamma s}
	\right]
\end{equation}
for all $|s| < \tfrac12$, and noting that
\begin{equation}
	\frac{\dd}{\dd s}
	\left.
		\int_0^\infty t^{s-1}
		F_{\Pi,\Sigma}(t)
		\dd t
	\right|_{s=0}
	=
	\int_0^\infty
	F_{\Pi,\Sigma}(t) 
	\log t
	\frac{\dd t }{t}
	<
	\infty
\end{equation}
along with
\begin{equation}
	\lim_{s \to 0}
	\frac{1}{\Gamma(s)}
	=
	0
	, \ 
	\lim_{s \to 0}
	\frac{\dd}{\dd s}
	\frac{1}{\Gamma(s)}
	=
	1
	, \ 
	\lim_{s \to 0}
	\frac{1}{s \Gamma(s)}
	=
	1
	, \textup{ and }
	\lim_{s \to 0}
	\frac{\dd}{\dd s}
	\frac{1}{s \Gamma(s)}
	=
	\gamma
\end{equation}
gives
\begin{equation}
	\zeta'_{\Delta_{\Pi,\Sigma}}(0)
	=
	\int_0^\infty F_{\Pi,\Sigma}(t) \frac{\dd t}{t}
	.
	\label{eq:z_reg_Delta_final}
\end{equation}

Note, finally, that from the definitions of $F$ and the various $a_n(\Pi)$ we have 
\begin{equation}
	\begin{split}
		F_{L \Pi,L\Sigma} (t)
		= &
		\Tr e^{- (t/L^2) \Delta_{\Pi,\Sigma}}
		-
		a_0(\Pi)
		\frac{L^2}{t}
		-
		a_1(\Pi)
		\frac{L}{t^{1/2}}
		-
		a_2(\Pi)
		\ind_{[0,e^{-\gamma} L^2]}(t)
		\\ = &
		F_{\Pi,\Sigma} \left( t/L^2 \right)
		+
		a_2(\Pi)
		\ind_{(1,L^2]}(e^\gamma t)
	\end{split}
	\label{eq:F_rescale}
\end{equation}
whence
\begin{equation}
	\zeta'_{\Delta_{L\Pi,L\Sigma}}(0)
	=
	\zeta'_{\Delta_{\Pi,\Sigma}}(0)
	+
	2 a_2(\Pi) \log L
	.
	\label{eq:zeta_rescaling}
\end{equation}

\section{Heat kernels, Brownian motion, and random walks}
\label{sec:hk}

I now turn to the representation I will use to estimate the trace of $(I-\tfrac14 \wt\Delta_{\Omega,\rho})^n$ appearing in \cref{eq:zeta_M_sum}.  
When $\rho \equiv 1$, the elements of this matrix are simply the transition probabilities for $n$ steps of a random walk which is killed on leaving $\Omega$; to be precise
\begin{equation}
	\left[  (I-\tfrac14 \wt\Delta_{\Omega,\rho})^n\right]_{xy}
	=
	\wt P_\Omega (x,y;n)
	:=
	\bP_x \left[ \wt \cW_n = y, \ \wt T_\Omega > n  \right]
	\label{eq:PDt_Omega_base}
\end{equation}
where $\wt \cW$ is a simple random walk on $\bZ^2$ starting at $x$,
and $\wt T_X$, $X \subset \bR^2$, denotes the first time at which $\wt \cW$ leaves $X$, which in light of the definition \cref{eq:Lap}, we understand to be the first time it jumps along a bond associated with a line segment not entirely contained in $X$.
For general $\rho$, 
\begin{equation}
	\left[  (I-\tfrac14 \wt\Delta_{\Omega,\rho})^n\right]_{xy}
	=
	\bE_x \left[ 
		\ind\left\{ \wt W_n = y \right\}
		\ind\left\{\wt T_\Omega > n \right\}
		\prod_{j=1}^n
		\rho_{\wt \cW_{j-1},\wt \cW_j}
	\right]
	.
\end{equation}
Letting $W_z (\wt \cW;n)$ denote the winding number (index) of $\wt \cW$ around $z$ up to time $n$, when $\rho$ is associated with some $\Sigma$, for the diagonal elements (which, ultimately, are the ones we are interested in) are given by
\begin{equation}
	\left[  (I-\tfrac14 \wt\Delta_{\Omega,\rho})^n\right]_{xx}
	=
	\wt P_{\Omega,\Sigma} (x,x;n)
	:=
	\bE_x\left[ 
		\ind\left\{ \wt W_n = y \right\}
		\ind\left\{\wt T_\Omega > n \right\}
		e^{i \pi W_\Sigma (\wt \cW;n)}
	\right]
	\label{eq:PDt_Omega_Sigma}
\end{equation}
with $W_\Sigma (\wt \cW; n) := \sum_{\sigma \in \Sigma} W_\sigma (\wt \cW; n)$.

Let us now look at the continuum.  With $\Sigma = \emptyset$, $\Delta_{\Omega,\Sigma}$ is just the usual Dirichlet Laplacian $\Delta_\Omega$ on $\Omega$, and so for $t>0$ $e^{-t \Delta_\Omega}$ is the semigroup giving the solutions of the heat equation with Dirichlet boundary conditions.
The Feynman-Kac formula then gives, for any suitable $f$, 
\begin{equation}
	\begin{split}
		\left[ e^{-t \Delta_\Omega} f \right](x)
		&=
		\bE_x\left[ \ind_{(2 t,\infty)}(T_\Omega) f(\cW_{2t}) \right]
		\\ & =
		\bE_x\left[ f(\cW_{2t}) \right]
		-
		\bE_x\left[ \ind_{[0,2t]} (T_\Omega) \bE_{\cW_{T_\Omega}}\left[ f\left( \cW_{t-T_\Omega} \right) \right] \right]
		\\ & = 
		\int_\Omega
		\left\{ 
			P(x,y; 2t) - \bE_x\left[ \ind_{[0,2t]} (T_\Omega) P(\cW_{T_\Omega},y;2t-T_\Omega) \right]
		\right\} 
		f(y)
		\dd y
	\end{split}
\end{equation}
where $\cW$ is a two-dimensional Brownian motion starting from $x$, $T_\Omega$ is its first exit time from $\Omega$, and 
\begin{equation}
	P(x,y;t) 
	=
	\frac{1}{2 \pi t} e^{-|x-y|^2/2 t}
	\label{eq:P_base}
\end{equation}
is the full-plane heat kernel;
in other words $e^{-\tfrac{t}{2}\Delta_\Omega}$ has integral kernel
\begin{equation}
	P_\Omega (x,y;t)
	:=
	P(x,y;t) - \bE_x\left[ \ind_{[0,t]} (T_\Omega) P(\cW_{T_\Omega},y;t-T_\Omega) \right]
	.
	\label{eq:PD_Om_def}
\end{equation}
Apart from the boundary (i.e.\ for $y \in \Omega$) it is evident from this expression that $P_\Omega$ is a smooth function of $y$ for $t > 0$, 
and so %
\begin{equation}
	P_\Omega (x,y;t)
	=
	\lim_{r \to 0^+}
	\frac{1}{\pi r^2}
	\bP_x \left[ \left|y - \cW_t \right| \le r, \ T_\Omega > t \right]
	,
	\label{eq:PD_Omega_base}
\end{equation}
which is helpful in deriving estimates on $P_\Omega$.

The Laplacian acting on functions on a double cover with prescribed monodromy gives a counterpart to \cref{eq:PDt_Omega_base}.  Although the following statement is easily obtained by standard methods, it does not appear to be a ready-made corollary of any results I was able to find, so I provide a proof in \cref{app:DC}:
\begin{theorem}
	\begin{equation}
		\Tr e^{- t \Delta_{\Pi,\Sigma}}
		=
		\int_{\Pi} P_{\Pi,\Sigma} (x,x;2 t) \dd x
		\label{eq:trace_PD}
	\end{equation}
	(note that the integral is over $\Pi$, not the double cover $\Pi_\Sigma$)
	where 
	\begin{equation}
		P_{\Pi,\Sigma} (x,x;t)
		=
		\lim_{r \to 0^+}
		\frac{1}{\pi r^2}
		\bE_x\left[ 
			\ind\left\{ |x - \cW_t| \le r \right\}
			\ind\left\{ T_\Pi > t \right\}
			e^{i \pi W_\Sigma(\cW;t)}
		\right]
		\label{eq:PD_Sig}
	\end{equation}
	and $W_\Sigma(\cW;t)$ denotes the sum over $y \in \Sigma$ of the winding number of $\cW$ around $y$ in the time interval $[0,t]$, rounded to the nearest integer.
	\label{thm:DC_trace}
\end{theorem}
Note that the same limit is obtained without rounding off the winding number\footnote{Something of the sort is needed to sensibly define values of $P_{\Pi,\Sigma} (x,y;t)$ with $x \neq y$, but these values are not important in the context of this article.}, but the rounded version is more convenient for the calculations I will present below.

\medskip

This probabilistic representation can be used to prove a number of bounds.  Let us begin with the following estimate on the effect of changes in $\Omega$ and/or $\Sigma$, which will be one of the basic ingredients for the rest of the present paper.
\begin{theorem}
	For all $\Omega,\Theta \subset \bR^2$, $\Sigma_1 \subset \Omega^\comp$, $\Sigma_2 \subset \Theta^\comp$,
	$x \in \Omega \cap \Theta$,
	\begin{equation}
		\left| P_{\Omega,\Sigma_1}(x,x;t) - P_{\Theta,\Sigma_2}(x,x;t)\right|
		\le
		\frac{4}{\pi t}
		\exp\left( -[\dist(x,(\Theta \triangle \Omega) \cup (\Sigma_1 \triangle \Sigma_2))]^2/t \right)
		.
		\label{eq:PD_domain_change}
	\end{equation}
	\label{thm:PD_domain_change}
\end{theorem}

\begin{proof}
	For brevity let $R := \dist(x,(\Theta \triangle \Omega) \cup (\Sigma_1 \triangle \Sigma_2))$.
	Using \cref{eq:PD_Sig},
	\begin{equation}
		\begin{split}
			&
			P_{\Omega,\Sigma_1}(x,x;t) - P_{\Theta,\Sigma_2}(x,x;t)
			\\& \quad =
			\lim_{r \searrow 0}
			\frac{1}{\pi r^2}
			\bE_x \left[ 
				\ind_{[0,r]} (|\cW_t-x|) 
				\left\{ 
					\ind_{(t,\infty)}(T_\Omega) 
					e^{i \pi W_{\Sigma_1}(\cW; t)} 
					-
					\ind_{(t,\infty)}(T_\Theta) 
					e^{i \pi W_{\Sigma_2}(\cW; t)} 
				\right\}
			\right]
			.
		\end{split}
		\label{eq:PD_domain_change_1}
	\end{equation}
	For the difference in the expectation to be nonzero, $\cW$ must either enter $\Theta \triangle \Omega$ or wind around a point in $\Sigma_1 \triangle \Sigma_2$ before time $t$, 
	either of which requires it to travel a distance at least $R$ from its starting point; thus
	\begin{equation}
		\begin{split}
			\left| P_{\Omega,\Sigma_1}(x,x;t) - P_{\Theta,\Sigma_2}(x,x;t)\right|
			& \le
			2
			\lim_{r \searrow 0}
			\frac{1}{\pi r^2}
			\bP_x\left[ 
				|\cW_t - x| \le r
				, \ 
				T_{B_R(x)} \le t
			\right]
			,
		\end{split}
		\label{eq:PD_domain_change_2}
	\end{equation}
	where $B_R(x)$ is the circle of radius $R$ centered at $x$.
	The event of $\cW$ leaving $B_R(x)$ is contained in union of the events of moving by more than $\tfrac{\sqrt 2}{2} R$ in any one of the four cardinal directions, and the contribution of each of these events can be calculated using the reflection principle:
	\begin{equation}
		\begin{split}
			&
			\bP_x\left[ 
				|\cW_t - x| \le r
				, \ 
				T_{B_R(x)} \le t
			\right]
			\\ & \quad
			\le
			\sum_{\hat e = \pm \hat e_1, \hat e_2}
			\bP_x\left[ 
				|\cW_t - x| \le r
				, \ 
				\max_{s \in [0,t]} (\cW_s - x) \cdot \hat e \ge \tfrac{\sqrt 2}2 R
			\right]
			\\ & \quad
			=
			4 \int_{B_r (x + \sqrt 2 R \hat e_1)}
			P(x,y;t) \dd y
			,
		\end{split}
		\label{eq:PD_domain_change_3}
	\end{equation}
	and inserting this in \cref{eq:PD_domain_change_2}, together with the exact expression for $P$, gives \cref{eq:PD_domain_change}.
\end{proof}

For the lattice case it is also possible to bound the effect of changes in the geometry in similar fashion:
\begin{theorem}
	There is a constant $\Cl{dom_diff_disc} >0$ such that
	for all $\Omega,\Theta \subset \bR^2$, $\Sigma_1 \subset \Omega^\comp$, $\Sigma_2 \subset \Theta^\comp$,
	$x \in \Omega \cap \Theta \cap \bZ^2$,
	\begin{equation}
		\left|\wt P_{\Omega,\Sigma_1}(x,x;n) - \wt P_{\Theta,\Sigma_2}(x,x;n)\right|
		\le
		\frac{\Cr{dom_diff_disc}}{n}
		\exp\left( -[\dist(x,(\Theta \triangle \Omega) \cup (\Sigma_1 \triangle \Sigma_2))]^2/n \right)
		.
	\end{equation}
	\label{thm:PDt_domain_change}
\end{theorem}

\begin{proof}
	Recalling \cref{eq:PDt_Omega_Sigma},
	\begin{equation}
		\begin{split}
			& \wt P_{\Omega,\Sigma_1}(x,x;t) 
			- 
			\wt P_{\Theta,\Sigma_2}(x,x;t)
			\\ & \quad
			=
			\bE_x\left[ 
				\ind_{ \left\{ x \right\}} \left( \wt \cW_t \right)
				\left\{ 
					\ind_{(t,\infty)}\left( \wt T_{\Omega} \right)
					e^{i \pi W_{\Sigma_1} (\wt \cW;t)}
					-
					\ind_{(t,\infty)}\left( \wt T_{\Theta} \right)
					e^{i \pi W_{\Sigma_2} (\wt \cW;t)}
				\right\}
			\right]
			,
		\end{split}
		\label{eq:PDt_domain_change_1}
	\end{equation}
	and as in the proof of \cref{thm:PD_domain_change} this can be bounded as
	\begin{equation}
		\left|
			\wt P_{\Omega,\Sigma_1}(x,x;t) 
			- 
			\wt P_{\Theta,\Sigma_2}(x,x;t)
		\right|
		8 \bP_x\left[ \wt \cW_n = x + 2 R' \hat e_1 \right]
		,
	\end{equation}
	with $R'$ the smallest integer such that $R' \ge \tfrac{\sqrt 2}{2} R$.
	Noting that $\wt \cW_n - \wt \cW_0$ has the same distribution as $\tfrac{\hat e_1 + \hat e_2}2 W_n^+ + \tfrac{\hat e_1 - \hat e_2}2 W_n^-$ with $W_n^{\pm}$ two independent one-dimensional random walks started at 0,
	\begin{equation}
		\bP_x\left[ \wt \cW_n = x + 2 R' \hat e_1 \right]
		=
		\left[ 2^{-n} \binom{n}{R'+n/2} \right]^2
		,
		\label{eq:independent_RW_trick}
	\end{equation}
	which implies the announced result together with \cref{lem:RW_Gaussian}.
\end{proof}

\section{Reference geometries for the continuum heat kernel}
\label{sec:ref_geom}
In this section we review the proof of \cref{eq:ht_asympt}.  This is substantially the same proof as \cite{Kac66}, who considered the case of a planar region (i.e.\ without branching).  We present it here in order to make it clear that it works unchanged for the more general case and to bring out a number of details in a way which will make it easier to compare with the discrete version.

For any $r >0$ and $x \in \bR^2$, let $B_r(x)$ denote the open disk of radius $r$ centered at $x$.
Let $\kappa=\kappa(\Pi)>0$ be the such that, 
for all $x \in \Pi$, $B_\kappa(x) \cap \partial\Pi$ is contained in the union of two adjacent line segments of $\partial\Pi$. 
\begin{figure}[t]
	\centering
	\begin{tabular}{m{0.7\textwidth} m{0.2\textwidth}}
		\begin{lrbox}{\figbox}
			\tikzpicturedependsonfile{PI_R.tikz}
			\input{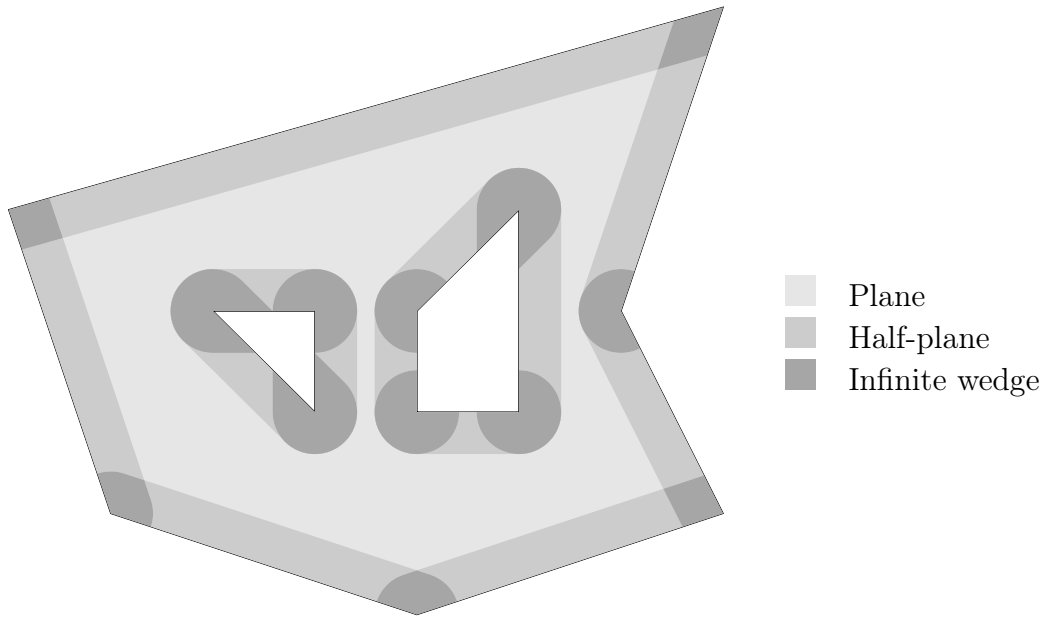}
		\end{lrbox}
		\resizebox{0.7\textwidth}{!}{\usebox\figbox}
		&
		\begin{tabular}{ll}
			\tikz[scale=0.4]{ \fill[gray!20] (0,0) rectangle (1,1);} 
			&
			Plane
			\\
			\tikz[scale=0.4]{ \fill[gray!40] (0,0) rectangle (1,1);} 
			&
			Half-plane
			\\
			\tikz[scale=0.4]{ \fill[gray!70] (0,0) rectangle (1,1);} 
			&
			Infinite wedge
		\end{tabular}
	\end{tabular}
		\caption{Example of a polygonal domain $\Pi$ showing the regions for which $R_\Pi(x)$ is of the specified form.}
	\label{fig:R_examples}
\end{figure}
We will approximate $P_\Pi (x,x;t)$ (and, in due time, $\wt P_{L\Pi} (x,x;t)$) by the heat kernel in a reference geometry, a set $R_\Pi (x)$ obtained by taking $B_{\kappa}(x) \cap \Pi$ and extending infinitely any components of $\partial\Pi$ present (see \cref{fig:R_examples}).
Note that this gives either the whole plane, a half-plane, or an infinite wedge, and that (up to translations) the same finite collection of wedges and half-planes appear for all $L \in \bN$.

When $R(x) = \bR^2$, we have $P_{R(x)}(x,x;t) = P_0(t) = 1/2\pi t$; note that
\begin{equation}
	\int_\Omega P_0(2 t) \dd t
	=
	\frac{|\Omega|}{4 \pi t}
	=
	a_0(\Omega) t^{-1}
	.
	\label{eq:ref_to_a0}
\end{equation}

\begin{figure}[t]
	\centering
	\begin{tabular}{lr}
		\begin{lrbox}{\figbox}
		\begin{tikzpicture}
			\fill[gray!30,even odd rule] (0,-1) -- (3,0) -- (2,2) -- (3,5) -- (-4,3) -- (-3,0) -- cycle
				(1,1) -- (1,3) -- (0,2) -- (0,1) -- cycle
				(-2,2) -- (-1,2) -- (-1,1) -- cycle;
			\draw (-0.5,3) node {$\Pi$};
			\draw  (0,-1) node[circle,draw,fill=white,inner sep=2pt,outer sep=0pt] {} -- + (-3,1) node[circle,fill=black,inner sep=2pt] {} node[midway, label=below:$\ell$] {};
		\end{tikzpicture}
		\end{lrbox}
		\resizebox{0.5\textwidth}{!}{\usebox\figbox}
		&
		\begin{lrbox}{\figbox}
		\begin{tikzpicture}
			\fill[gray!20] (-1.5,0.5) to [out=75,in=160] (1.5,4.5) -- (4.5,3.5) to [out=-20,in=75] (4.5,-1.5) -- cycle;
			\fill[gray!50] (3,-1) -- (0,0) -- (1.5,4.5) -- + (3,-1) -- cycle;
			\fill[gray] (3,-1) -- (0,0) -- (0.5,1.5) -- + (3,-1) -- cycle;
			\draw (1.5,-0.5) + (1,3) node {$E$};
			\draw (1.5,-0.5) + (0.2,0.6) node {$E_\kappa$};
			\draw  (0,0) -- (1.5,4.5);
			\draw [dashed] (3,-1) -- ++ (1.5,4.5);
			\draw [dashed] (-1.5,0.5) -- (0,0) (3,-1) -- (4.5,-1.5);
			\draw [dashed] (3,-1) %
			-- (0,0) %
			node[midway, label=below:{$\ell(E) = \ell$}] {};
			\draw (-0.3,2) node {$H(E)$};
		\end{tikzpicture}
		\end{lrbox}
		\resizebox{0.5\textwidth}{!}{\usebox\figbox}
	\end{tabular}
	\caption{Constructions related to the edge term.  On the left, an example of a polygonal region $\Pi$ with one of the half-open line segments $\ell$ of the boundary highlighted.
	On the right, (part of) the corresponding semi-infinite rectangle $E$ and half plane $H(E)$; note $E_\kappa \subset E \subset H(E)$.}
	\label{fig:edge}
\end{figure}
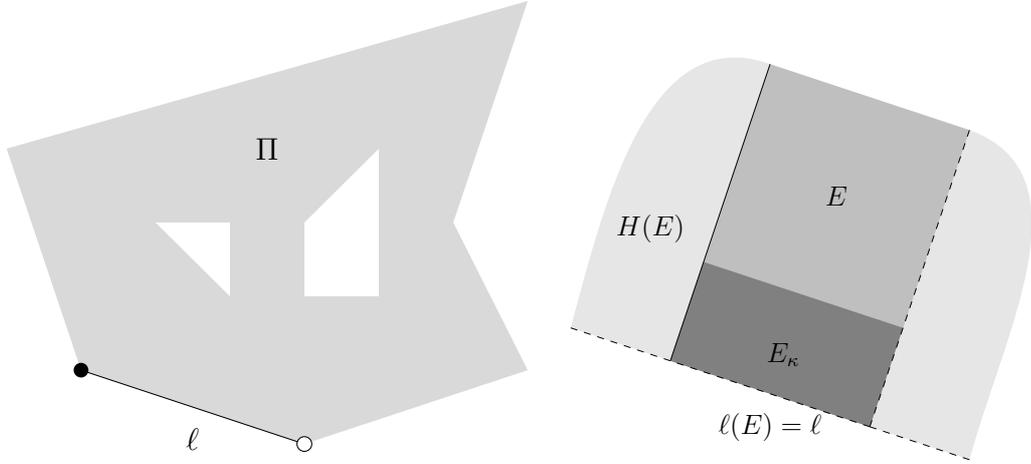
The boundary of $\Pi$ is a disjoint union of semi-open line segments.
Let $\cE$ be set whose elements are the semi-infinite rectangles swept out by taking one such line segment and displacing it perpendicularly in the direction of the interior of $\Pi$ and removing the original line segment, so that each $E \in \cE$ is closed on exactly one of its three sides\footnote{This is not immediately important, but when considering the discrete version it will be helpful that this has the consequence that $LE$, for $L \in \bN$, is a disjoint union of $L$ translates of $E$. \label{fn:LE}}.
For $E \in \cE$, let $\ell(E)$ denote the line segment used to construct $E$, $w(E)$ denote the width of $E$ (that is, the length of $\ell(E)$), so that $\sum_{E \in \cE} w(E)$ is the perimeter of $\Pi$; let $E_\kappa \subset E$ be the $w(E) \times \kappa$ rectangle with side $\ell(E)$.  Finally, let $H(E)$ be the open half plane which contains $E$ and whose boundary contains $\ell(E)$;
then for a given $E$ the set of $x\in \Pi$
with $R(x) = H(E)$ is a proper subset of $E_\kappa$.  Some aspects of these definitions are illustrated in Figure~\ref{fig:edge}.

Letting $x_\perp$ denote the component of $x$ perpendicular to $\ell(E)$, we have
\begin{equation}
	P_{H(E)}(x,x;t)
	=
	P_0(t)
	-
	\frac{1}{2 \pi t}
	e^{-x_\perp^2/t}
	=:
	P_0(t)
	+
	\partial P_{H(E)} (x;t)
	\label{eq:6P_def}
\end{equation}
thus
\begin{equation}
	\int_E 
	\partial P_{H(E)} (x;2 t)
	\dd x
	=
	-
	w(E)
	\frac{1}{8 \sqrt{\pi}}
	t^{-1/2}
\end{equation}
and thus
\begin{equation}
	\sum_{E \in \cE}
	\int_E 
	\partial P_{H(E)} (x,x;2 t)
	\dd x
	=
	a_1(\Pi) t^{-1/2}
	\label{eq:ref_to_a1}
\end{equation}
with $a_1$ as defined after \cref{eq:ht_asympt}.
Also,
\begin{equation}
	\begin{split}
		\left|
		\int_{E_\kappa} 
		\partial P_{H(E)} (x,x;2 t)
		\dd x
		-
		\int_E 
		\partial P_{H(E)} (x,x;2t)
		\dd x
		\right|
		= &
		w(E)
		\frac{1}{8 \sqrt{\pi t}}
		\erfc \left( \kappa/\sqrt{t} \right)
		\\ \le &
		\frac{w(E)}{8 \pi \kappa}
		e^{-\kappa^2/t}
		.
	\end{split}
\end{equation}

Finally, let $\cC$ be the set of $C \subset \bR^2$ which are open infinite wedges matching the corners of $\Pi$.  For such a $C$ let $\varphi(C)$ be its opening angle, and let $E^1(C),E^2(C)$ be the elements of $\cE$ associated with the line segments incident on the corner; 
thus $C = H(E^1(C)) \cap H(E^2(C))$ if $\varphi(C) \le \pi$ 
and $C = H(E^1(C)) \cup H(E^2(C))$ if $\varphi(C) > \pi$.
Also let $v(C)$ be the vertex of $C$, and let
$C_\kappa$ 
be the set of $x \in C$ which are within a distance $\kappa$ of both edges of $C$, or equivalently the set of $x \in \Pi$ for which $R_\Pi (x) = C$.
For $j=1,2$ let 
$\hat C^j_\kappa = (E^j)_\kappa(C) \setminus C$ and let $\hat C^j$ be the corresponding infinite wedge; both of these are $\emptyset$ if $\varphi(C) \ge \pi/2$.   
\begin{figure}[t]
	\centering
	\begin{tabular}{lr}
		\begin{lrbox}{\figbox}
			\begin{tikzpicture} [thick,record path/.style={/utils/exec=\tikzset{bent brace/.cd,#1},
			    decorate,decoration={markings,mark=at position 0 with
			    {\setcounter{bracep}{1}%
			    \path(0pt,{\pgfkeysvalueof{/tikz/bent brace/scale}*\pgfkeysvalueof{/tikz/bent brace/raise}})
			     coordinate (bracep-\number\value{bracep});
			     \pgfmathsetmacro{\mystep}{(\pgfdecoratedpathlength-4pt)/int(1+(\pgfdecoratedpathlength-4pt)/2pt)}
			     \xdef\mystep{\mystep}},
			    mark=between positions 2pt and {\pgfdecoratedpathlength-2pt} step \mystep pt with {\stepcounter{bracep}
			    \pgfmathtruncatemacro{\itest}{
			    ifthenelse(abs(\number\value{bracep}*2pt/\pgfdecoratedpathlength-0.5)<1pt/\pgfdecoratedpathlength,1,0)}
			    \coordinate (bracep-\number\value{bracep}) at
			    (0,{\pgfkeysvalueof{/tikz/bent brace/scale}*(\pgfkeysvalueof{/tikz/bent brace/raise}+2pt+\itest*2pt)})
			    \ifnum\itest=1
			     node[transform shape,bent brace/nodes]{\pgfkeysvalueof{/tikz/bent brace/text}}
			    \fi;},
			    mark=at position 1 with {\stepcounter{bracep}
			    \coordinate (bracep-\number\value{bracep}) at (0,\pgfkeysvalueof{/tikz/bent brace/scale}*\pgfkeysvalueof{/tikz/bent brace/raise});}
			    }},brace/.style={insert path={plot[smooth,samples at={1,...,\number\value{bracep}},variable=\x]
			    (bracep-\x)}},
			    bent brace/.cd,raise/.initial=0pt,scale/.initial=1,text/.initial={},
			    nodes/.style={},node options/.code=\tikzset{bent brace/nodes/.append style=#1},
			    mirror/.code={\tikzset{bent brace/scale=-1}}]
			\tikzmath{\radius = 3.0; 
				\r1 = \radius*0.3; %
				\r2 = \radius * 1.1; %
				\kap = \radius * 0.2;
				\Angle1 = 30;
				\Angle2 = 90;
				\Angle0 = \Angle1 + 90;
				\Angle3 = \Angle2 - 90;
			};
			\begin{scope}
				\clip (0,0) -- (\Angle3:\radius) arc [start angle = \Angle3, end angle = \Angle0, radius = \radius] -- cycle;
				\fill[gray!40] (0,0) -- (\Angle0:\kap) -- +(\Angle1:\radius) -- (\Angle1:\radius) -- cycle;
				\fill[gray!40] (0,0) -- (\Angle3:\kap) -- +(\Angle2:\radius) -- (\Angle2:\radius) -- cycle;
			\end{scope}
			\begin{scope}
				\clip (0,0) -- (\Angle0:\kap) -- +(\Angle1:\radius) -- (\Angle1:\radius) -- cycle;
				\fill[gray!70] (0,0) -- (\Angle3:\kap) -- +(\Angle2:\radius) -- (\Angle2:\radius) -- cycle;
			\end{scope}
			\draw[very thick] (0,0) -- (\Angle1:\radius) (0,0) -- (\Angle2:\radius); 
			\node[anchor = north east] at (0,0) {$v(C)$};
			\draw[very thin] (\Angle1:\r1) arc [start angle=\Angle1, end angle=\Angle2, radius=\r1] node[pos=0.7, anchor = south west] {$\varphi(C)$};
			\curlybrace[very thin]{\Angle1}{\Angle2}{\r2} \node[anchor=south west] at (curlybracetipn) {$C$};
			\draw (0,0) -- (\Angle0:\radius);
			\tikzmath{\x=0.09*\radius;}
			\draw[very thin] (\Angle1:\x) -- + (\Angle0:\x) -- (\Angle0:\x);
			\curlybrace[very thin]{\Angle2}{\Angle0}{\r2} \node[anchor=south] at (curlybracetipn) {$\hat C^1$};
			\draw (0,0) -- (\Angle3:\radius);
			\tikzmath{\x=1.6*\x;}
			\draw[very thin] (\Angle3:\x) -- + (\Angle2:\x) -- (\Angle2:\x);
			\curlybrace[very thin]{\Angle3}{\Angle1}{\r2} \node[anchor=west] at (curlybracetipn) {$\hat C^2$};
			\tikzmath{\r3 = \r2+1;}
			\curlybrace[very thin]{\Angle3}{\Angle2}{\r3} \node[anchor = south west] at (curlybracetipn) {$E^2(C)$};
			\tikzmath{\r4 = \r3 + 0.2;}
			\curlybrace[very thin]{\Angle1}{\Angle0}{\r4} \node[anchor=south] at (curlybracetipn) {$E^1(C)$};
			\fill[black] (0,0) circle (0.1);
		\end{tikzpicture}
		\end{lrbox}
		\resizebox{0.5\textwidth}{!}{\usebox\figbox}
		&
		\begin{lrbox}{\figbox}
			\begin{tikzpicture} [thick,record path/.style={/utils/exec=\tikzset{bent brace/.cd,#1},
			    decorate,decoration={markings,mark=at position 0 with
			    {\setcounter{bracep}{1}%
			    \path(0pt,{\pgfkeysvalueof{/tikz/bent brace/scale}*\pgfkeysvalueof{/tikz/bent brace/raise}})
			     coordinate (bracep-\number\value{bracep});
			     \pgfmathsetmacro{\mystep}{(\pgfdecoratedpathlength-4pt)/int(1+(\pgfdecoratedpathlength-4pt)/2pt)}
			     \xdef\mystep{\mystep}},
			    mark=between positions 2pt and {\pgfdecoratedpathlength-2pt} step \mystep pt with {\stepcounter{bracep}
			    \pgfmathtruncatemacro{\itest}{
			    ifthenelse(abs(\number\value{bracep}*2pt/\pgfdecoratedpathlength-0.5)<1pt/\pgfdecoratedpathlength,1,0)}
			    \coordinate (bracep-\number\value{bracep}) at
			    (0,{\pgfkeysvalueof{/tikz/bent brace/scale}*(\pgfkeysvalueof{/tikz/bent brace/raise}+2pt+\itest*2pt)})
			    \ifnum\itest=1
			     node[transform shape,bent brace/nodes]{\pgfkeysvalueof{/tikz/bent brace/text}}
			    \fi;},
			    mark=at position 1 with {\stepcounter{bracep}
			    \coordinate (bracep-\number\value{bracep}) at (0,\pgfkeysvalueof{/tikz/bent brace/scale}*\pgfkeysvalueof{/tikz/bent brace/raise});}
			    }},brace/.style={insert path={plot[smooth,samples at={1,...,\number\value{bracep}},variable=\x]
			    (bracep-\x)}},
			    bent brace/.cd,raise/.initial=0pt,scale/.initial=1,text/.initial={},
			    nodes/.style={},node options/.code=\tikzset{bent brace/nodes/.append style=#1},
			    mirror/.code={\tikzset{bent brace/scale=-1}}]
			    \tikzmath{\radius = 2.5; 
				\r1 = \radius*0.3; %
				\r2 = \radius * 1.1; %
				\Angle1=130;
				\Angle2=-100;
				\Mid = (\Angle1+\Angle2)/2;
				\kap = \radius * 0.5;
			};
			\node[anchor = 15] at (0,0) {$v(C)$};
			\begin{scope}
				\clip (0,0) -- (\Angle1:\radius) arc [start angle=\Angle1, delta angle = -90, radius=\radius] -- cycle;
				\fill[gray!40] (0,0) -- ($(0,0)!1!-90:(\Angle1:\kap)$) -- + (\Angle1:\radius) -- (\Angle1:\radius) -- cycle;
			\end{scope}
			\begin{scope}
				\clip (0,0) -- (\Angle2:\radius) arc [start angle=\Angle2, delta angle = 90, radius=\radius] -- cycle;
				\fill[gray!40] (0,0) -- ($(0,0)!1!90:(\Angle2:\kap)$) -- + (\Angle2:\radius) -- (\Angle2:\radius) -- cycle;
			\end{scope}
			\fill[gray!70] (0,0) -- (\Angle1:\kap) arc [start angle=\Angle1, end angle=\Angle2,radius=\kap] -- cycle;
			\draw[very thin] (\Angle1:\r1) arc [start angle=\Angle1, end angle=\Angle2, radius=\r1] node[pos=0.5, anchor = -165,outer sep=4pt,inner sep=0pt] {$\varphi(C)$};
			\tikzmath{\boxsize = 0.3;}
			\tikzmath{\Anglebis = \Angle1 - 90;}
			\draw [very thick] (0,0) -- (\Angle1:\radius);
			\draw (0,0) -- (\Anglebis:\radius);
			\draw[very thin] (\Angle1:\boxsize) -- + (\Anglebis:\boxsize) -- (\Anglebis:\boxsize);
			\curlybrace[very thin]{\Anglebis}{\Angle1}{\r2} \node[anchor=south] at (curlybracetipn) {$E^1(C)$};
			\tikzmath{\Anglebis = \Angle2 + 90;}
			\draw[very thick] (0,0) -- (\Angle2:\radius);
			\draw (0,0) -- (\Anglebis:\radius);
			\draw[very thin] (\Angle2:\boxsize) -- + (\Anglebis:\boxsize) -- (\Anglebis:\boxsize);
			\curlybrace[very thin]{\Angle2}{\Anglebis}{\r2} \node[anchor=120] at (curlybracetipn) {$E^2(C)$};
			\tikzmath{\r = \radius *1.45;}
			\curlybrace[very thin]{\Angle2}{\Angle1}{\r} \node[anchor = west] at (curlybracetipn) {$C$};
			\fill[black] (0,0) circle (0.1);
		\end{tikzpicture}
		\end{lrbox}
		\resizebox{0.4\textwidth}{!}{\usebox\figbox}
	\end{tabular}
	\caption{Two examples of the notation for the definitions of the corner term; $C_\kappa$ is drawn in dark gray and a portion of $E_\kappa^1,E_\kappa^2$ in light gray, cf.\ Figure~\ref{fig:R_examples}. Note that in the example on the left $E_\kappa^1,E_\kappa^2$ extend outside of $C$ (and outside of $\Pi$); these extending portions are $C_\kappa^1$,$C_\kappa^2$.}
	\label{fig:corner}
\end{figure}

The point of all this is
\begin{equation}
	\begin{split}
		\int_{\Pi}
		P_{R(x)} (x,x;t)
		\dd x
		= &
		\int_{\Pi}
		P_0(t)
		\dd x
		+
		\sum_{E \in \cE}
		\int_{E_\kappa} \partial P_{H(E)} (x;t) \dd x
		\\ & \  +
		\sum_{C \in \cC}
		\left[ 
			\int_{C_\kappa} P^\angle_C (x;t) \dd x
			-
			\sum_{j=1,2}
			\int_{\hat C^j_\kappa} \partial P_{E^j} (x;t) \dd x
		\right]
	\end{split}
\end{equation}
(note that the integrals over $\hat C^j_\kappa$ cancel the part of the $E_\kappa$ integrals where $x \notin \Pi$) where
\begin{equation}
	P^\angle_C (x;t)
	:=
	\begin{dcases}
		P_C (x,x;t) - P_0(t)
		, & x \in C \setminus (\hat E^1 \cup \hat E^2)
		\\
		P_C (x,x;t) - P_{H(E_j)} (x,x;t)
		, & x \in (C \cap \hat E_j) \setminus \hat E_k, \ \left\{ j,k \right\} = \left\{ 1,2 \right\}
		\\
		\begin{aligned}
			& 	
			P_C (x,x;t) - P_0(t) 
			\\
			& \quad - \partial P_{ H(E_1)} (x;t) - \partial P_{H(E_2)},
		\end{aligned}
		& x \in C \cap \hat E_1 \cap \hat E_2,
	\end{dcases}
	\label{eq:Pangle_def}
\end{equation}
where $\hat E^j$ is the open quarter-plane with vertex $v(C)$ obtained by extending $E^j_\kappa$.
\begin{figure}[t]
	\centering
		\begin{lrbox}{\figbox}
			\begin{tikzpicture} [thick,record path/.style={/utils/exec=\tikzset{bent brace/.cd,#1},
			    decorate,decoration={markings,mark=at position 0 with
			    {\setcounter{bracep}{1}%
			    \path(0pt,{\pgfkeysvalueof{/tikz/bent brace/scale}*\pgfkeysvalueof{/tikz/bent brace/raise}})
			     coordinate (bracep-\number\value{bracep});
			     \pgfmathsetmacro{\mystep}{(\pgfdecoratedpathlength-4pt)/int(1+(\pgfdecoratedpathlength-4pt)/2pt)}
			     \xdef\mystep{\mystep}},
			    mark=between positions 2pt and {\pgfdecoratedpathlength-2pt} step \mystep pt with {\stepcounter{bracep}
			    \pgfmathtruncatemacro{\itest}{
			    ifthenelse(abs(\number\value{bracep}*2pt/\pgfdecoratedpathlength-0.5)<1pt/\pgfdecoratedpathlength,1,0)}
			    \coordinate (bracep-\number\value{bracep}) at
			    (0,{\pgfkeysvalueof{/tikz/bent brace/scale}*(\pgfkeysvalueof{/tikz/bent brace/raise}+2pt+\itest*2pt)})
			    \ifnum\itest=1
			     node[transform shape,bent brace/nodes]{\pgfkeysvalueof{/tikz/bent brace/text}}
			    \fi;},
			    mark=at position 1 with {\stepcounter{bracep}
			    \coordinate (bracep-\number\value{bracep}) at (0,\pgfkeysvalueof{/tikz/bent brace/scale}*\pgfkeysvalueof{/tikz/bent brace/raise});}
			    }},brace/.style={insert path={plot[smooth,samples at={1,...,\number\value{bracep}},variable=\x]
			    (bracep-\x)}},
			    bent brace/.cd,raise/.initial=0pt,scale/.initial=1,text/.initial={},
			    nodes/.style={},node options/.code=\tikzset{bent brace/nodes/.append style=#1},
			    mirror/.code={\tikzset{bent brace/scale=-1}}]
			\tikzmath{\radius = 3.0; 
				\r1 = \radius*0.35; %
				\kap = \radius * 0.2;
				\Angle1 = 30;
				\Angle2 = 90;
				\Angle0 = \Angle1 + 180;
				\Angle3 = \Angle2 - 180;
				\Mid = (\Angle1 + \Angle2)/2;
			};
			\draw[very thick] (0,0) -- (\Angle1:\radius) (0,0) -- (\Angle2:\radius); 
			\node[anchor = north west] at (0,0) {$v(C)$};
			\tikzmath{\r2 = (\radius*1.1)+0.4;}
			\curlybrace[very thin]{\Angle1}{\Angle2}{\r2} \node[anchor=south west] at (curlybracetipn) {$C$};
			\draw (0,0) -- (\Angle0:\radius);
			\draw (0,0) -- (\Angle3:\radius);
			\tikzmath{\r3 = \radius * 1.1;};
			\curlybrace[very thin]{\Angle3}{\Angle2}{\r3} \node[anchor = west] at (curlybracetipn) {$H(E^2(C))$};
			\tikzmath{\r4 = \r3 + 0.2;};
			\curlybrace[very thin]{\Angle1}{\Angle0}{\r4} \node[anchor=-50, outer sep = 4pt] at (curlybracetipn) {$H(E^1(C))$};
			\fill[black] (0,0) circle (0.1);
			\draw[dashed] (0,0) -- (\Mid:\radius);
			\draw[very thin] (\Angle2:\r1) arc [start angle = \Angle2, end angle = \Mid, radius=\r1] node[pos=0.5, anchor = -110] {$\frac{\varphi(C)}{2}$};
			\tikzmath{\r5=\radius*0.9;};
			\coordinate  (x) at (50:\r5);
			\fill[black] (x) circle (0.1);
			\node[anchor = north, outer sep = 2pt] at (x) {$x$};
			\coordinate (y) at ($(0,0)!(x)!(\Angle2:\radius)$);
			\fill[black] (y) circle (0.1);
			\draw [very thin] (x) -- (y);
			\draw [very thin] (y) rectangle + (0.25,0.25);
			\node[anchor = east, outer sep = 4pt] at (y) {$y$};
		\end{tikzpicture}
		\end{lrbox}
		\resizebox{0.7\textwidth}{!}{\usebox\figbox}
		\caption{Example of the estimates on $P^\angle_C(x;t)$ in the case $\varphi(C) \le \pi/2$; here $P^\angle_C = P_C + P_{\bR^2} - P_{H(E^1(C))} - P_{H(E^2(C))}$.  For the indicated $x$, $y$ is both the closest point in $C \triangle H(E^1(C))$ and the closest point in $H(E^2(C)) \triangle \bR^2$, and $|x-y| \ge \sin (\varphi(C)/2) \left|x - v(C)\right|$.  }
	\label{fig:corner_corrections}
\end{figure}
$P_C^\angle$ can be bounded using \cref{thm:PD_domain_change}, and it satisfies  $|P^\angle_C (x;t)| \le 2 \C t^{-1} e^{-k|x-v(C)|^2/t} $ 
for $k= k(\Pi) = \tfrac18 \min_{C \in \cC} \sin^2 \varphi(C)/2$:
for the first two cases in \cref{eq:Pangle_def} this is straightforward, 
for the last case (without loss of generality assuming $x$ is closer to $E_2$ than $E_1$, see \cref{fig:corner_corrections}) I write
\begin{equation}
	\begin{split}
		P^\angle_C (x;t)
		& =
		P_C (x,x;t) - P_0(t) 
		- \partial P_{ H(E_1)} (x;t) - \partial P_{H(E_2)}
		\\ &=
		\left( P_C (x,x;t) - P_{H(E_2)} (x,x;t) \right)
		+
		\left( P_{\bR^2} (x,x;t) - P_{H(E_1)} (x,x;t) \right)
		,
	\end{split}
	\label{eq:Pangle_diffs}
\end{equation}
and as can be seen in \cref{fig:corner_corrections} the distances involved in the las two differences are of the desired order.
From \cref{eq:6P_def} we see that the same is true of $\partial P(x;t)$ for $x \in \hat C^1 \cup \hat C^2$.
Consequently, we see that the integrals in
\begin{equation}
	a_2 (C)
	:=
	\int_{C} P^\angle_C (x;2 t) \dd x
	-
	\sum_{j=1,2}
	\int_{\hat C^j} \partial P_{E^j} (x;2 t) \dd x
	\label{eq:a2_C_def}
\end{equation}
are convergent; by rescaling $x$ we see that this is in fact independent of $t$, and by rotating and translating appropriately we see that in fact it depends only on $\varphi(C)$; an exact expression is given in \cite{MS67.eig}.
We also have
\begin{equation}
	a_2 (C)
	-
	\left[ 
		\int_{C_\kappa} P^\angle_C (x ;2t) \dd x
		-
		\sum_{j=1,2}
		\int_{\hat C^j_\kappa} \partial P_{E^j} (x;2 t) \dd x
	\right]
	=
	\cO\left( \kappa^{-1} e^{- k \kappa^2/t} \right);
	\label{eq:a2_C_rem}
\end{equation}
combining this with \cref{eq:ref_to_a1,eq:ref_to_a0} and letting $a_2(\Pi) = \sum_{C \in \cC} a_2(C)$ we get
\begin{equation}
	\int_{\Pi} P_{R_\Pi (x)} (x,x;2 t) \dd x
	=
	a_0(\Pi) t^{-1}
	+
	a_1(\Pi) t^{-1/2}
	+
	a_2(\Pi)
	+
	\cO\left( e^{- \delta^2 / t} \right)
	.
	\label{eq:heat_trace_reference}
\end{equation}
Since
\begin{equation}
	\int_{\Pi} P_{\Pi} (x,x;t) \dd x
	-
	\int_{\Pi} P_{R_\Pi (x)} (x,x;t) \dd x
	=
	\cO\left( e^{- \kappa^2  / t} \right)
	\label{eq:continuum_R_error}
\end{equation}
we obtain \cref{eq:ht_asympt}.

\section{Long time: convergence of the discrete and continuous heat traces}
\label{sec:convergence_to_continuous}

Let $\cB_s$ be a two dimensional Brownian bridge, shifted and rescaled so that $\cB_0 = \cB_t = x$; we then have
\begin{equation}
	P_{\Omega,\Sigma}(x,x;t)
	=
	P_0(t)
	\bE^t_x \left[ D_\Omega(\cB) e^{i \pi W_\Sigma(\cB;t)} \right]
	;
	\label{eq:bridge_PD}
\end{equation}
where $D_\Omega(\cB)$ is the indicator function of the event that $\cB_s \in \Omega$ for all $s \in [0,t]$;
recall $P_0(t) = P_{\bR^2} (x,x;t) = 1/2 \pi t$.
Similarly, letting $\wt \cB_{m}$ be the process obtained by conditioning the random walk $\wt \cW_{m}$ to return to its starting point after $2n$ steps,
\begin{equation}
	\wt P_{\Omega,\Sigma}(x,x;2n)
	=
	\wt P_0(2n)
	\bE^{n}_x \left[ D_\Omega(\wt \cB) e^{i \pi W_\Sigma(\wt \cB;2n)} \right]
	,
	\label{eq:bridge_tPD}
\end{equation}
where $\wt P_0 (n) = \wt P_{\bR^2}(x,x;n)$ is the probability that a simple random walk on $\bZ^2$ returns to its starting point after $n$ steps; representing this in terms of two independent $1D$ walks as in \cref{eq:independent_RW_trick},
\begin{equation}
	\wt P_0 (2n) 
	=
	\left[ 2^{-2n} \binom{2n}{n}  \right]
	=
	\frac{1}{\pi n} + \cO\left( \frac{1}{n^2} \right)
	=
	2 P_0 (n) + \cO\left( \frac{1}{n^2} \right)
	\label{eq:P0_comparison}
\end{equation}
for large $n$ using Stirling's formula.

Combining \cref{eq:bridge_PD,eq:bridge_tPD} gives
\begin{equation}
	\begin{split}
		P_{\Omega,\Sigma}(x,x;n)
		-
		\frac12
		\wt 
		P_{\Omega,\Sigma}(y,y;2 n)
		= &
		P_0(n)
		\left\{ 
			\bE^n_x \left[ D_\Omega(\cB) e^{i \pi W_\Sigma(\cB;n)} \right]
			-
			\bE^n_y \left[ D_\Omega(\wt \cB) e^{i \pi W_\Sigma(\wt \cB;2n)} \right]
		\right\}
		\\ & +
		\left[ 
			1 -
			\frac{ 2 P_0(n)}{\wt P_0(2n)}
		\right]
		\wt 
		P_{\Omega,\Sigma}(y,y;2 n)
		;
	\end{split}
	\label{eq:layer_diff_1}
\end{equation}
note that the endpoints $x$ and $y$ may be different, and the different time scales for the two processes, which are consistent with the asymptotic behavior of their variances for long times,
and the factor of $1/2$ in front of the discrete probabilities which compensates for the reducibility of the discrete random walk.

This is helpful because the two bridge processes can be coupled using a variant of the dyadic coupling, using a result of \cite{LF07}:

\begin{theorem}
	There is a constant $\Cl{dyadic} > 0$ such that
	for any $y \in \bZ^2$, $x \in \bR^2$ with $|x - y|_\infty \le 1/2$ and any integer $n > 1$,
	there exists a coupling of $\cB_s$ and $\wt \cB_s$ such that
	\begin{equation}
		\bP_{x,y}^{n} \left[ \sup_{s \in [0,n]} \left| \cB_{s} - \wt\cB_{2s}\right| \ge \Cr{dyadic} \log  n \right]
		\le
		\Cr{dyadic} n^{-30}
		.
		\label{eq:bridge_Skor}
	\end{equation}
	\label{thm:bridge_Skor}
\end{theorem}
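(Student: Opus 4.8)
The plan is to reduce the two-dimensional statement to a pair of independent one-dimensional couplings, to which the dyadic (KMT-type) coupling of \cite{KMT75} (in the bridge form of \cite{LF07}) can be applied directly. First I would exploit the fact that the continuous-time random walk $\wt\cW$ on $\bZ^2$ has the two coordinates evolving independently: writing $\wt\cW_s = (\wt\cW^{(1)}_{s}, \wt\cW^{(2)}_{s})$, each coordinate is a continuous-time simple random walk on $\bZ$ (with its own independent Poisson clock of rate $2$ once one decomposes the rate-$4$ clock by which coordinate moves), and the same independence survives conditioning on $\wt\cW_t = y$ since the conditioning factorizes over coordinates. Likewise $\cB_s = (\cB^{(1)}_s, \cB^{(2)}_s)$ splits into two independent Brownian bridges, and $|x-y|_\infty \le 1/2$ means each $|x^{(k)} - y^{(k)}| \le 1/2$. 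So it suffices to produce, for $k=1,2$, a coupling of $\cB^{(k)}$ and $\wt\cB^{(k)}$ with $\bP[\sup_{s}|\cB^{(k)}_s - \wt\cB^{(k)}_s| \ge \tfrac12 \Cr{dyadic}\log t] \le \tfrac12 \Cr{dyadic} t^{-30}$, and then combine them (the two couplings are run independently, and a union bound over $k$ together with $\sup_s|\cB_s - \wt\cB_s| \le \sum_k \sup_s |\cB^{(k)}_s - \wt\cB^{(k)}_s|$ gives the claim, after renaming the constant).

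For the one-dimensional statement I would invoke the result of \cite{LF07} on the dyadic coupling of a random walk bridge to a Brownian bridge. The standard KMT construction couples the random walk path $(S_n)_{n \le N}$ to a Brownian path so that the maximal discrepancy exceeds $C\log N$ with probability at most $C N^{-p}$ for any fixed $p$ (here we want $p$ large enough, e.g.\ $p = 31$, so that after accounting for the Poisson number of steps we keep an exponent $>30$); the bridge version conditions both endpoints. Two points need care in passing from the discrete-time result to our continuous-time walk. (i) The number of jumps $N_t$ of $\wt\cW^{(k)}$ up to time $t$ is Poisson with mean $2t$; on the event $\{N_t \le 10 t\}$, which has probability $1 - e^{-ct}$ by a standard Poisson tail bound (cf.\ \cref{eq:Poisson_basic}), we apply the discrete-time coupling with $N = N_t$ and get a discrepancy bound of order $\log N_t \le \log(10t)$ with the desired probability; the complementary event is absorbed into the $t^{-30}$ error since $e^{-ct}$ is much smaller. (ii) One must interpolate in time: the discrete coupling controls $S_n$ versus the Brownian motion at the times $n$ of the jumps, but $\wt\cB^{(k)}$ is the piecewise-constant interpolation (value $S_n$ between the $n$-th and $(n+1)$-th jumps) while the Brownian bridge moves continuously. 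The extra error from this is controlled by the maximal oscillation of the Brownian bridge over the inter-jump intervals, which are of length $\cO(\log t)$ with overwhelming probability (the maximum of $N_t$ independent exponential waiting times of mean $1/2$ is $\cO(\log t)$ except with probability $\cO(t^{-30})$), and the oscillation of a Brownian bridge over an interval of length $\cO(\log t)$ is $\cO(\sqrt{\log t \cdot \log t}) = \cO(\log t)$ with the required probability by a reflection/maximal-inequality estimate. Also the interpolation at the level of jump times between the continuous-time clock and the abstract index $n$ must be matched, which is again a Brownian-oscillation estimate over time intervals that are $\cO(\log t)$ except with tiny probability.

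The main obstacle I expect is bookkeeping the endpoint conditioning correctly through the coordinate-splitting and the Poissonization, so that what is coupled really is $\wt\cB$ conditioned to return to $y$ at time $t$ versus $\cB$ conditioned to return to $x$ at time $t$, rather than the unconditioned processes; the cleanest route is to condition first on the jump count $N_t$ and the jump times (which are unaffected by the endpoint conditioning), then apply the bridge coupling of \cite{LF07} to the resulting discrete bridge of length $N_t$ from $y^{(k)}$ to $y^{(k)}$, and finally transfer this to the comparison with a Brownian bridge from $x^{(k)}$ to $x^{(k)}$ by noting that $|x^{(k)}-y^{(k)}|\le 1/2$ costs only an additional $\cO(1)$ in the discrepancy (one can, e.g., add a deterministic affine shift of size $\le 1/2$). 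None of these steps is conceptually hard, but one must be disciplined about which quantity each exponential or polynomial error term is bounding, so that the final exponent $30$ (with room to spare: the intermediate bounds can be taken with any exponent we like) and the final $\Cr{dyadic}\log t$ discrepancy come out. I would present the one-dimensional bridge coupling statement as a lemma, citing \cite{LF07,KMT75}, and then devote the bulk of the proof of \cref{thm:bridge_Skor} to the Poissonization and time-interpolation estimates sketched above.
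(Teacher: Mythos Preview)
Your proposal is correct and follows essentially the same route as the paper: reduce to one dimension, invoke the KMT/LF07 dyadic bridge coupling, pass from discrete to continuous time via Poisson tail bounds, and absorb the endpoint shift $|x-y|_\infty\le 1/2$ by translation. The paper's proof is a two-sentence sketch that cites \cite[Corollary~3.2]{LF07} and \cite{KMT75} for the discrete-time case, then appeals to ``a bound on the variance of the Poisson distribution'' for the passage to continuous time and ``translation'' for $x\neq y$; your write-up simply unpacks these steps (coordinate independence, conditioning on the jump count, controlling inter-jump oscillations) in more detail than the paper does.
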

\begin{proof}
	The corresponding statement with $x=y=0$ is obtained by rescaling \cite[Corollary~3.2]{LF07}, which is in turn based on decomposing $\wt \cB$ into two independent diagonal random walks as in the proof of \cref{thm:PDt_domain_change} to use the one dimensional construction of~\cite{KMT75}; 
	translating these processes to the desired starting points then gives the desired result up to a change in the constant, since $|x-y| \le 1/2 \le \log n$.
\end{proof}

Using this representation,
\begin{multline}
	\bE^{n}_x \left[ D_\Omega(\cB) e^{i \pi W_\Sigma(\cB;n)} \right]
	-
	\bE^{n}_y \left[ D_\Omega(\wt \cB) e^{i \pi W_\Sigma(\wt \cB;2n)} \right]
	\\ 
	=
	\bE_{x,y}^{n}
	\left[ 
		D_\Omega(\cB) e^{i \pi W_\Sigma(\cB;n)} 
		-
		D_\Omega(\wt \cB) e^{i \pi W_\Sigma(\wt \cB;2n)} 
	\right]
	.
\end{multline}
For any $\delta > 0$, restricting to the situation where the two processes remain within a distance $\delta$ of each other for the whole time interval $[0,n]$, the difference on the right hand side can be nonzero only if both $\cB$ and $\wt \cB$ come within a distance $\delta$ of $\Omega^\comp$ but at least one of them remains entirely in $\Omega$.  
This in turn requires that in the same time interval $\cB_s$ leaves
\begin{equation}
	\Omega^-(\delta)
	:=
	\left\{ 
		x \in \Omega
		: \ 
		d(x,\Omega^\comp) > \delta
	\right\}
	,
	\label{eq:Omega_minus_def}
\end{equation}
but does not leave
\begin{equation}
	\Omega^+(\delta)
	:=
	\left\{ 
		x \in \bR^2
		: \ 
		d(x,\Omega) < \delta
	\right\}
	.
	\label{eq:Omega_plus_def}
\end{equation}
Using this observation with $\delta = \Cr{dyadic} \log n$ and noting that $|1 - 2 P_0(n)/\wt P_0(2n)| \le \Cl{plane_disc}/n$ (recall \cref{eq:P0_comparison}), \cref{eq:layer_diff_1} gives 
\begin{equation}
	\begin{split}
		&
		\left|
			P_\Omega (x,x;n)
			-
			\frac12
			\wt P_\Omega (y,y;2n)
		\right|
		\\ & \quad
		\le
		\lim_{r \to 0^+}
		\frac{1}{\pi r^2}
		\bP_x
		\left[ 
			\cW_t \in B_r(x), \ 
			T_{\Omega^-(\delta)} < n < T_{\Omega^+(\delta)}
		\right]
		+
		\C/n^2
		\\ & \quad
		=
		\left\{ 
			P_{\Omega^+(\delta)}(x,x;2n)
			-
			P_{\Omega^-(\delta)}(x,x;2n)
		\right\}
		+
		\Clast/n^2
		.
	\end{split}
	\label{eq:layer_diff}
\end{equation}
The first term on the right hand side appears to be difficult to estimate precisely when $\Omega$ is not convex and when the distance of $x$ from the boundary is small compared to $\sqrt t$; fortunately some distinctly suboptimal bounds (\cref{lem:PD_Beurling,lem:layer_hard} below) will suffice to give a serviceable estimate.  
The starting point are some estimates on hitting times, some of which are admittedly crude versions of well known results, of which I provide basic proofs for completeness.
\begin{lemma}
	For all $\tau > 0$, $\Omega \subset \bR^2$, and all $x\in \bR^2$,
	\begin{equation}
		\bP_x\left[ T_{\Omega} \ge \tau  \right]
		\le
		e^{1 - \C \tau /|\Omega|}
		\label{eq:survival_simple}
	\end{equation}
	with $\Clast = 2 \pi / e$.
	\label{lem:survival_simple}
\end{lemma}
\begin{proof}
	Using the Markov property of Brownian motion,
	\begin{equation}
		\begin{split}
			\bP_x \left[ T_\Omega \ge \tau \right]
			& \le
			\bP_x \left[ \cW_{\tau/n} \in \Omega, \ \cW_{2\tau/n} \in \Omega, \ \dots \right]
			\\ &\le 
			\left\{ \sup_{y \in \Omega} \bP_y \left[ \cW_{\tau/n} \in \Omega \right] \right\}^n
			\le
			\left\{ \frac{n}{2 \pi \tau} |\Omega| \right\}^n
		\end{split}
	\end{equation}
	for any positive integer $n$; as long as $\tau \ge e |\Omega|/2\pi$ we can always choose
	\begin{equation}
		\frac{e^{-1} 2 \pi \tau}{|\Omega|}
		-1
		\le 
		n
		\le 
		\frac{e^{-1} 2 \pi \tau}{|\Omega|}
	\end{equation}
	to obtain \cref{eq:survival_simple}, and for smaller $\tau$ the bound is trivially correct.
\end{proof}

This can be used to obtain another estimate which is more useful close to the boundary.  I will state these estimates for domains of the following type:
\begin{definition}
	An \emph{$S(\rho)$ set}
	is an open set $\Omega \subset \bR^2$ such that any point $y \in \bR^2 \setminus \Omega$ is an endpoint of a line segment $\ell_y$ of length $\rho$ which does not intersect $\Omega$.
	\label{def:Srho}
\end{definition}
Note that the polygon $\Pi$ we consider is an $S(1)$ set, and $L\Pi$ is an $S(L)$ set;
$\Pi^+(\delta)$ and $\Pi^-(\delta)$ (respectively $(L\Pi)^+(\delta)$ and $(L\Pi)^-(\delta)$ are $S(1/2)$ (resp.\ $S(L/2)$) for $\delta < \Cl{delta_layer_1}$ (resp.\ $\delta/L< \Clast$) for some $\Pi$-dependent $\Clast$.

\begin{lemma}
	There exists $\Cl{survival} > 0$ such that 
	if $\tau >0$, $\Omega$ is an $S(\sqrt\tau)$ set, and $x \in \Omega$, then
	\begin{equation}
		\bP_x\left[ T_{\Omega} > \tau \right]
		\le 
		\Clast
		\frac{R^{1/2} (\log \tau - 2 \log R)^{1/4}}{\tau^{1/4}}
		\label{eq:survival_bound}
	\end{equation}
	with $R$ the distance between $x$ and $\partial\Omega$, as long as $R \le e^{-2/e}\sqrt \tau $.
	\label{lem:survival_bound}
\end{lemma}

\begin{proof}
	Let $y$ be a point in $\partial\Omega$ with $|x-y|=R$, and let $\ell_y$ be the line segment of length $\sqrt\tau$ associated with $y$ by \cref{def:Srho}.  Since $\ell \subset \Omega^\comp$, trivially
	\begin{equation}
		\bP_x\left[ T_{\Omega} > \tau \right]
		\le
		\bP_x\left[ T_{B_\rho(y)} > \tau \right]
		+
		\bP_x\left[ T_{\ell^\comp} > T_{B_\rho(y)} \right]
	\end{equation}
	for any $\rho > 0$.  For $\rho \le \sqrt \tau$ the second term on the right hand side is the solution of a harmonic problem which can be solved explicitly:  for $z$ a unit complex number depending on $\ell$ and $y$,
	\begin{equation}
		\bP_x\left[ T_{\ell^\comp} > T_{B_\rho(y)} \right]
		=
		\frac{4}{\pi}
		\Re  \arctan \left( z \sqrt{\frac{x-y}{\rho}} \right)
		\le
		\frac{4}{\pi}
		\sqrt{\frac{R}{\rho}},
		\label{eq:pseudo_Beurling}
	\end{equation}
	and using this along with \cref{lem:survival_simple} we obtain
	\begin{equation}
		\bP_x\left[ T_{\Omega} > \tau \right]
		\le
		\exp\left( 1 - \frac{\tau}{e \rho^2} \right)
		+
		\frac{4}{\pi}
		\sqrt{\frac{R}{\rho}}
		.
	\end{equation}
	Choosing $\rho = 2 e^{-1/2} \sqrt{\tau/(\log \tau - 2 \log R)}$ (which satisfies $\rho \le \sqrt \tau$ under the assumption $R \le e^{-2/e} \sqrt\tau$), this gives \cref{eq:survival_bound}.
\end{proof}
This can then be use to obtain an estimate on the heat kernel which will be useful for estimating $P_{(L\Pi)^+(\delta)}$, and hence the difference, near the boundary where any Brownian motion is very likely to exit $(L\Pi)^+(\delta)$:
\begin{lemma}
	There exists $\Cl{PD_Beurling}>0$ such that
	if $t >0$, $\Omega$ is an $S(\sqrt{\tau})$ set for some $\tau \le t/3$, and $x \in \Omega$, then
	\begin{equation}
		P_\Omega(x,x;t)
		\le
		\Clast
		\frac{R \left( \log \tau - 2 \log R 
		\right)^{1/2}}{ \tau^{1/2} t}
		\label{eq:PD_Beurling_bound}
	\end{equation}
	with $R$ the distance between $x$ and $\partial\Omega$, as long as $R \le e^{-2/e}\sqrt{\tau} $.
	\label{lem:PD_Beurling}
\end{lemma}
\begin{proof}
	Noting that
	\begin{equation}
		\begin{split}
			P_\Omega (x,x;t)
			&=
			\int_{\Omega} \dd y
			\int_{\Omega} \dd z
			P_\Omega (x,y;\tau)
			P_\Omega (y,z;\tau)
			P_\Omega (z,x;t-2\tau)
			\\ & \le
			\left[ \int_\Omega \dd y P_\Omega (x,y;\tau) \right]^2
			\sup_{z,w \in \Omega} P_{\bR^2} (z,w;t-2\tau)
			\\ & \le
			\left\{ \bP_x \left[ T_{\Omega^\comp} > t/3 \right] \right\}^2
			\frac{3}{2 \pi t}
		\end{split}
	\end{equation}
	using the assumption $\tau \le t/3$ in the last inequality,
	the result follows from Lemma~\ref{lem:survival_bound}.
\end{proof}

When the starting point is far from the boundary, I can instead obtain a useful estimate via the probability of a Brownian motion, having left $(L\Pi)^-(\delta)$, manages to travel far enough to get back to its starting point without leaving $(L\Pi)^+(\delta)$, whose boundary is now at a distance of order $\delta$.

\begin{lemma}
	There exists $\Cl{layer_hard}$ for which the following holds.
	For any $S(\rho)$ set $\Xi$, any open measurable $\Omega \subset \Xi$ with $\max_{y \in \partial\Omega} \min_{z \in \partial\Xi} |y-z| = \delta'$, and any $x \in \Omega$ with $\dist(x,\partial\Omega) = R$,
	\begin{equation}
			P_\Xi(x,x;t) - P_\Omega(x,x;t)
			\le
			\frac{\Clast}{R^2}
			\sqrt{\frac{\delta'}{\min(R,2\rho)}}
			.
		\label{eq:layer_hard}
	\end{equation}
	\label{lem:layer_hard}
\end{lemma}
\begin{proof}
	Firstly, using the strong Markov property and conditioning on the joint distribution of $T_\Omega$ and $\cW_{T_\Omega}$,
	\begin{equation}
		\begin{split}
			P_\Xi(x,x;t) - P_\Omega(x,x;t)
			& =
			\lim_{r \to 0^+} \frac{1}{\pi r^2}
			\bP_x\left[ \left|\cW_t -x\right| < r, \ T_\Omega < t < T_\Xi \right]
			\\ & =
			\bE_x\left[ \ind_{[0,t)}(T_\Omega) P_\Xi (\cW_{T_\Omega},x;t-T_\Omega) \right]
			\\ & \le
			\sup_{y \in \partial\Omega}
			\sup_{s > 0}
			P_\Xi (y,x;s)
			;
		\end{split}
		\label{eq:boundary_diff_bound}
	\end{equation}
	similarly, conditioning on the event that $\cW_s$ leaves a certain disk $B_q(z)$ which contains $y$ but not $x$ before exiting $\Xi$,
	\begin{equation}
		\begin{split}
			P_\Xi (y,x;s)
			& =
			\bE_y\left[ \ind \left\{ T_{B_q(z)} \le \min(s,T_\Xi) \right\} P_\Xi(\cW_{T_{B_q(z)}},x;s-T_{B_q(z)}) \right]
			\\ & \le
			\bP_y\left[ T_{B_q(z)} \le T_\Xi \right] \sup_{w \in B_q(z)} \sup_{u > 0} P(w,x;u).
		\end{split}
		\label{eq:boundary_PDxy_bound}
	\end{equation}
	Letting $z$ be a point on the boundary of $\Xi$ with $|z - y| \le \delta'$, for all $q \le \rho$ we can use the assumption that $\Xi$ is an $S(\rho)$ set as in \cref{eq:pseudo_Beurling} to bound $\bP_y\left[ T_{B_q(z)} \le T_\Xi \right] \le (4/\pi) \sqrt{q/\delta'}$, and noting $P(w,x;u) \le 1/(e \pi |w-x|^2) \le 1/(e \pi (R-q)^2)$, \cref{eq:layer_hard} follows by choosing $q = \min(\rho,R/2)$.
\end{proof}

\begin{lemma}
	There exist $\Cl{layer_delta}, \ \Cl{layer}, \ \Cl{layer_time}, \ \Cl{layer_max_time} > 0$, depending on $\Pi$, such that for $\delta \le \Cr{layer_delta}L$, and $\dist(x,\partial\Pi^+(\delta)) = R$,
	\begin{equation}
		\left[ P_{(L\Pi)^+(\delta)} - P_{(L\Pi)^-(\delta)} \right](x,x;t)
		\le
		\Cr{layer}
		\begin{cases}
			\frac{(\log t - 2 \log R)^{1/2}R}{t^{3/2}},
			&
			R  < t^{3/7} 
			\\
			\frac{\sqrt{\delta}}{R^{5/2}},
			&
			t^{3/7} \le R < L
			\\
			\frac{\sqrt{\delta}}{R^{2}L^{1/2}},
			&
			R \ge L
		\end{cases}
		\label{eq:layer_main_early}
	\end{equation}
	whenever $ \Cr{layer_time} \le t \le \tfrac34 L^2$, and
	\begin{equation}
		\left[ P_{(L\Pi)^+(\delta)} - P_{(L\Pi)^-(\delta)} \right](x,x;t)
		\le
		\Cr{layer}
		\begin{cases}
			\frac{(\log L - 2 \log R)^{1/2}R}{L t},
			&
			R  < t^{2/7} L^{2/7} 
			\\
			\frac{\sqrt{\delta}}{R^{5/2}},
			&
			t^{2/7} L^{2/7} \le R < L
			\\
			\frac{\sqrt{\delta}}{R^{2}L^{1/2}},
			&
			R \ge L
		\end{cases}
		\label{eq:layer_main_late}
	\end{equation}
	when $\tfrac34 L^2 \le t \le \Cr{layer_max_time} L^{5/2}$.
	\label{lem:layer_main}
\end{lemma}

\begin{proof}
	First of all, there exist $\Cr{layer_delta}, \Cl{layer_delta_prime}<\infty$ such that,
	for all $\delta \le \Cr{layer_delta}$,
	$\Pi^+(\delta)$ is an $S(1/2)$ set and 
	$\max_{y \in \partial\Pi^-(\delta)} \min_{z \in \partial\Pi^+(\delta)} |y-z| \le \Cr{layer_delta_prime} \delta$;
	rescaling, this implies that $(L\Pi)^+(\delta)$ is $S(L/2)$ 
	and $\max_{y \in \partial(L\Pi)^-(\delta)} \min_{z \in \partial(L\Pi)^+(\delta)} |y-z| \le \Cr{layer_delta_prime} \delta$
	for $\delta \le \Cr{layer_delta} L$.

	Then by choosing $\Cr{layer_time}$ large enough \cref{lem:PD_Beurling} applies in the first case on the right hand side of \cref{eq:layer_main_early} with $\tau = t/3$, giving the advertised bound,
	and for $\Cr{layer_max_time}$ small enough likewise for the first case in \cref{eq:layer_main_late} with $\tau = L^2/4$.
	The other cases follow from \cref{lem:layer_hard} with $\rho = L/2$.
\end{proof}
Integrating in $x$, we obtain
\begin{corollary}
	There exists $\C$ such that 
	\begin{equation}
		\int_{(L\Pi)^+(\delta)} 
		\left[ P_{(L\Pi)^+(\delta)} - P_{(L\Pi)^-(\delta)} \right](x,x;t)
		\dd x
		\le
		\Clast
		\left( \sqrt{\delta} + \log L \right)
		\begin{cases}
			L t^{-9/14}
			, &
			t \le \tfrac34 L^2
			\\
			L^{4/7} t^{-3/7}
			, &
			t > \tfrac34 L^2
		\end{cases}
	\end{equation}
	whenever $\delta \le \Cr{layer_delta}L$ and $ \Cr{layer_time} \le t \le \Cr{layer_max_time} L^{5/2}$.
	\label{lem:layer_Pi_integral}
\end{corollary}

We now arrive at the conclusion of this section:
\begin{theorem}
	For each $\Pi$ there exists $\Cl{disc_int_err} < \infty$
	such that,
	whenever $K \in \bN$ and ${e^{9/2}\le K \le L}$,
	\begin{equation}
		\left|
			\int_{K^2}^\infty
			\int_{L\Pi} P_{L \Pi} (x,x;t) \dd x
			\frac{\dd t}{ t}
			-
			\sum_{n=  K^2}^\infty
			\sum_{y \in L \Pi \cap \bZ^2} \frac{\wt P_{L \Pi} (y,y;2n)}{2n}
		\right|
		\le 
		\Cr{disc_int_err}
		\left( \frac{L \log L}{K^{9/7}} + \frac{ L^2}{K^4} \right)
		.
		\label{eq:discretization_err_integrated}
	\end{equation}
	\label{thm:discretization_err_integrated}
\end{theorem}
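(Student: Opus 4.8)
�The plan is to split the integrand difference into two contributions: first, the difference between the continuum integral $\int_{L\Pi} P^\D_{L\Pi}(x,x;t)\dd x$ and a corresponding Riemann sum $\sum_{y \in L\Pi \cap \bZ^2} P^\D_{L\Pi}(y,y;t)$ over lattice points (comparing the continuum heat kernel at nearby points), and second, the difference between $P^\D_{L\Pi}(y,y;t)$ and $\wt P^\D_{L\Pi}(y,y;t)$ at each lattice point (comparing continuum and discrete heat kernels at the same point). For the second contribution, \cref{eq:layer_diff} is the key tool: it bounds $|P^\D_\Omega(x,x;t) - \wt P^\D_\Omega(y,y;t)|$ (for $|x-y|_\infty \le 1/2$) by $2[P^\D_{\Omega^+(\delta)} - P^\D_{\Omega^-(\delta)}](x,x;t) + \Cr{plane_disc}/t^2$ with $\delta = \Cr{dyadic}\log t$. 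Summing the latter over the $\cO(L^2)$ lattice points and integrating $\Cr{plane_disc}/(t^3)$ from $K^2$ to $\infty$ gives a contribution $\cO(L^2/K^4)$, which accounts for the second term in \cref{eq:discretization_err_integrated}.

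For the main term, I would convert the sum $\sum_{y} [P^\D_{(L\Pi)^+(\delta)} - P^\D_{(L\Pi)^-(\delta)}](y,y;t)$ into an integral. The summand is controlled by \cref{lem:layer_main}, and the key point is that it is only large in the thin boundary layer $(L\Pi)^-(\delta)$; away from the boundary (where $\dist(x,\partial(L\Pi)^+(\delta))$ is large) it decays fast enough to be harmless. So I would bound the lattice sum over the boundary layer by a constant times the integral $\int_{(L\Pi)^-(\delta)}[P^\D_{(L\Pi)^+(\delta)} - P^\D_{(L\Pi)^-(\delta)}](x,x;t)\dd x$ (using that the summand, as a function of $x$, does not vary too wildly over unit cells — this requires a Lipschitz-type or monotonicity argument comparing the layer-difference at a lattice point with its value at nearby points, e.g.\ by absorbing a unit shift into a slight enlargement of $\delta$), and then apply \cref{lem:layer_Pi_integral} with $\delta = \Cr{dyadic}\log t$. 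That lemma gives the bound $\Clast \frac{L}{t^{11/18}}(\sqrt{\delta} + \log L) = \cO\!\left(\frac{L\log L}{t^{11/18}}\right)$ on the boundary-layer piece, valid for $[\max(e^4,\Cr{layer_a}(\alpha)\delta)]^{9/4} \le t \le L^{9/4}$. Integrating $\frac{L\log L}{t^{11/18}}\cdot\frac{1}{t}$ from $K^2$ to (roughly) $L^{9/4}$ yields $\cO(L\log L/K^{11/9})$, the first term in \cref{eq:discretization_err_integrated}; the restriction $K \ge e^{9/2}$ ensures $K^2$ exceeds the lower threshold (since $\delta = \Cr{dyadic}\log t$ grows only logarithmically, $\Cr{layer_a}(\alpha)\delta \le t$ automatically once $t$ is a small power of $K$), and $K \le L$ keeps us below the upper threshold $t \le L^{9/4}$. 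For the tail $t > L^{9/4}$ I would instead use the cruder corollary to \cref{lem:survival_simple} (the bound $P^\D_\Omega(x,x;t) \le \frac{e}{2\pi t}\exp(-t/4\dist^2(x,\Omega^\comp))$), which for points at distance $\lesssim L$ from the boundary gives exponential decay in $t/L^2$, producing a negligible contribution.

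The step I expect to be the main obstacle is the passage from the lattice sum to the continuum integral for the boundary-layer contribution: one must show that $\sum_{y \in L\Pi \cap \bZ^2}[P^\D_{(L\Pi)^+(\delta)} - P^\D_{(L\Pi)^-(\delta)}](y,y;t)$ is comparable (up to a constant) to $\int_{(L\Pi)^-(\delta)}[\cdots]\dd x$. Since the layer-difference heat kernel is not monotone in $x$, the cleanest route is to note that for $|x - y|_\infty \le 1/2$ one has $(L\Pi)^-(\delta) \subset (L\Pi + (y-x))^-(\delta + 1/\sqrt2)$-type inclusions, so $[P^\D_{(L\Pi)^+(\delta)} - P^\D_{(L\Pi)^-(\delta)}](y,y;t)$ is dominated by $[P^\D_{(L\Pi)^+(\delta + 1)} - P^\D_{(L\Pi)^-(\delta - 1)}](x,x;t)$ (or a similar slightly-fattened version), whence $\sum_y[\cdots](y,y;t) \le \int_{(L\Pi)^-(\delta+1)}[P^\D_{(L\Pi)^+(\delta+1)} - P^\D_{(L\Pi)^-(\delta-1)}](x,x;t)\dd x + (\text{corrections})$; the corrections and the change $\delta \mapsto \delta \pm 1$ are harmless since \cref{lem:layer_main,lem:layer_Pi_integral} depend on $\delta$ only through $\sqrt\delta$ and $\log L$. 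The bookkeeping of these inclusions near the corners (where the geometry of $(L\Pi)^\pm(\delta)$ is more delicate) requires care, but the quantitative estimates are all already in place.
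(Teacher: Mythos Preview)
Your approach is correct in substance and uses exactly the same lemmas as the paper, but you have introduced an unnecessary detour that creates the very ``main obstacle'' you worry about. The split into (a) $\int_{L\Pi}P^\D_{L\Pi}(x,x;t)\dd x-\sum_y P^\D_{L\Pi}(y,y;t)$ and (b) $\sum_y[P^\D_{L\Pi}-\wt P^\D_{L\Pi}](y,y;t)$ is not needed: \cref{eq:layer_diff} already compares $P^\D_{L\Pi}(x,x;t)$ with $\wt P^\D_{L\Pi}(y,y;t)$ for \emph{any} $x$ with $|x-y|_\infty\le 1/2$, not just $x=y$. The paper therefore associates to each $x\in L\Pi$ the nearest lattice point $y$, applies \cref{eq:layer_diff}, and \emph{integrates over $x$}; this yields directly
\[
\Bigl|\int_{L\Pi}P^\D_{L\Pi}-\sum_y\wt P^\D_{L\Pi}\Bigr|
\le 2\int_{L\Pi}\bigl[P^\D_{(L\Pi)^+(\delta)}-P^\D_{(L\Pi)^-(\delta)}\bigr](x,x;t)\dd x+\Cr{plane_disc}\frac{|\Pi|L^2}{t^2},
\]
so no lattice-sum-to-integral conversion is ever required, and your inclusions $\delta\mapsto\delta\pm 1$ and corner bookkeeping are simply not needed. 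Your route also leaves contribution (a) unaddressed: ``comparing the continuum heat kernel at nearby points'' would itself require a layer-type estimate (e.g.\ coupling bridges from $x$ and from $y$ by a rigid shift), which you do not supply; again this is subsumed by the paper's one-step use of \cref{eq:layer_diff}. The remainder of your plan---splitting the integral into $(L\Pi)^-(\delta)$ (handled by \cref{lem:layer_Pi_integral}) and the thin strip $L\Pi\setminus(L\Pi)^-(\delta)$ (bounded by $1/4\pi t$ times its area $\cO(L\log t)$), and using \cref{lem:survival_bound} for $t>L^{9/4}$---matches the paper exactly.
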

\begin{proof}
	Using \cref{eq:layer_diff}, we have
	\begin{equation}
		\begin{split}
			&
			\left|
			\int_{L\Pi} P_{L \Pi} (x,x;n) \dd x
			-
			\frac12
			\sum_{y \in L \Pi \cap \bZ^2} \wt P_{L \Pi} (y,y;2n)
			\right|
			\\ & \qquad \le
			\int_{L \Pi}
			\left\{ 
				P_{(L\Pi)^+(\delta)}(x,x;n)
				-
				P_{(L\Pi)^-(\delta)}(x,x;n)
			\right\}
			\dd x
			+
			\Cr{plane_disc} \frac{|\Pi|L^2}{t^2}
			\\ & \qquad \le
			\int_{(L \Pi)^-(\delta)}
			\left\{ 
				P_{(L\Pi)^+(\delta)}(x,x;n)
				-
				P_{(L\Pi)^-(\delta)}(x,x;n)
			\right\}
			\dd x
			\\ & \qquad \qquad
			+ 
			\int_{L \Pi \setminus (L \Pi)^-(\delta)}
			\left\{ 
				P_{(L\Pi)^+(\delta)}(x,x;n)
			\right\}
			\dd x
			+
			\C \frac{L^2}{n^2}
		\end{split}
	\end{equation}
	with $\delta = \Cr{dyadic} \log n$.
	For $n \le \Cr{layer_max_time} L^{5/2}$, we can bound the first integral using \cref{lem:layer_Pi_integral}, and the second using $P_{(L\Pi)^+(\delta)}(x,x;n) \le 1/4\pi n$, giving 
	\begin{equation}
			\left|
			\int_{L\Pi} P_{L \Pi} (x,x;n) \dd x
			-
			\frac12
			\sum_{y \in L \Pi \cap \bZ^2} \wt P_{L \Pi} (y,y;2n)
			\right|
			\le
			\C
			\left[ 
				\frac{L^2}{n^2}
				+
				\log L 
				\begin{cases}
					L n^{-9/14}
					, &
					n \le \tfrac34 L^2
					\\
					L^{4/7} n^{-3/7}
					, &
					n > \tfrac34 L^2
				\end{cases}
			\right]
			.
	\end{equation}
	For $n > \Cr{layer_max_time} L^{5/2}$ a stronger bound follows easily from \cref{lem:survival_bound}; the error made by replacing the $t$ integral in \cref{eq:discretization_err_integrated} with a sum can easily be seen to be $\cO(L^2/K^4)$ using the Euler-Maclaurin formula, and so \cref{eq:discretization_err_integrated} follows by summing the above estimates.
\end{proof}

\section{The reference geometry terms in the discrete case}
\label{sec:discrete_ref}

In this section I control the difference between the discrete heat kernel on $\Pi$ and in the reference geometry, and relate the latter to the terms $\alpha_0, \ \alpha_1,\  \alpha_2$ in \cref{eq:main}.
To take into account the presence of a lattice scale, we define
\begin{equation}
	\wt P^\R_{L \Pi} (x;n)
	:=
	\wt P_{L R_\Pi (L^{-1} x)}(x,x;n)
	.
	\label{eq:tPR_def}
\end{equation}

We now introduce counterparts of $a_0(\Pi), \ a_1(\Pi), a_2(\Pi)$; however their time dependence is not so simple as in the continuum case.  
For $H$ a half-plane we let
\begin{equation}
	\partial \wt P_H (x;n)
	:=
	\wt P_H (x,x;n)
	-
	\wt P_0 (n)
	\label{eq:6P_tilde_def}
\end{equation}
(cf.\ \cref{eq:6P_def})
and define $\wt P^\angle_C$ by the analogue of \cref{eq:Pangle_def}.  We then have 
\begin{equation}
	\begin{split}
		\sum_{x \in L\Pi \cap \bZ^2}
		\wt P^\R_{L\Pi} (x;n)
		= &
		\left| \Omega \cap \bZ^2 \right| \wt P_0(n)
		+
		\sum_{E \in \cE }
		\sum_{x \in LE_\kappa \cap \bZ^2}
		\partial \wt P_{H(E)} (x;n)
		\\ & +
		\sum_{C \in \cC}
		\left[ 
			\sum_{x \in L C_\kappa \cap \bZ^2}
			\wt P^\angle_C (x;n)
			-
			\sum_{j=1,2}
			\sum_{x \in L \hat C^j_\kappa \cap \bZ^2} 
			\partial \wt P_{H(E^j)} (x;n)
		\right]
		.
	\end{split}
	\label{eq:PR_tilde_decomp}
\end{equation}
To extract the asymptotics we are interested in, let
\begin{equation}
	A_0 (\Omega,n)
	:=
	\# \left( \Omega \cap \bZ^2 \right) 
	\wt P_0(n)
	,
	\label{eq:A0_def}
\end{equation}
\begin{equation}
	A_1(L\Pi,n)
	:=
	\sum_{E \in \cE }
	\hat A_1(E,L,n)
	:=
	\sum_{E \in \cE }
	\sum_{x \in LE \cap \bZ^2}
	\partial \wt P_{L H(E)} (x;n)
	,
	\label{eq:A1_def}
\end{equation}
\begin{equation}
	A_2(L\Pi,n)
	:=
	\sum_{C \in \cC}
	\hat A_2(C,n)
	:=
	\sum_{C \in \cC}
	\left[ 
		\sum_{x \in L C \cap \bZ^2}
		\wt P^\angle_C (x;n)
		-
		\sum_{j=1,2}
		\sum_{x \in L \hat C^j \cap \bZ^2} 
		\partial \wt P_{H(E^j)} (x;n)
	\right]
	.
	\label{eq:A2_def}
\end{equation}
Note that $A_2$ is actually independent of $L$, $A_2 (L\Pi,n) = A_2(\Pi,n)$, since rescaling each wedge $C \in \cC$ by an integer factor gives the same result; 
it is also invariant under the symmetries of $\bZ^2$, and in fact $\hat A_2(C,n)$ is evidently determined by the opening angle of $C$ and its orientation relative to the lattice.
Analogously $A_1(L\Pi,n) = L A_1(\Pi,n)$, and $\hat A_1(E,L,n)$ is fixed by the length and orientation of the edge of $L\Pi$ corresponding to $E$.

Subtracting these terms from \cref{eq:PR_tilde_decomp}, we can repeat the analysis of \cref{sec:ref_geom} to write the remainder in terms of differences between (discrete) heat kernels on domains which differ at distance at least $\kappa L$ from the points of interest, which can be bounded using \cref{thm:PDt_domain_change}, giving the counterpart of \cref{eq:heat_trace_reference}
\begin{equation}
	\begin{split}
		&
		\left|
		\sum_{x \in L\Pi \cap \bZ^2}
		\wt P^\R_{L \Pi} (x;n)
		-
		A_0(L\Pi,n)
		-
		A_1(L\Pi,n)
		-
		A_2(L\Pi,n)
		\right|
		\\ & \qquad 
		\le
		\C 
		e^{-\C L^2/n}
	\end{split}
\end{equation}
for $n \le L^2$.
Applying \cref{thm:PDt_domain_change} again to bound the difference between $\wt P^\R_{L\Pi}$ and $\wt P_{L\Pi}$ gives an estimate of similar magnitude,
\begin{equation}
	\begin{split}
		&
		\left|
		\sum_{x \in L\Pi \cap \bZ^2}
		\wt P_{L \Pi} (x;n)
		-
		A_0(L\Pi,n)
		-
		A_1(L\Pi,n)
		-
		A_2(L\Pi,n)
		\right|
		\\ & \qquad 
		\le
		\C 
		e^{-\C L^2/n}
		,
	\end{split}
	\label{eq:PDt_to_A}
\end{equation}
and consequently
\begin{equation}
	\begin{split}
		&
		\left|
		\sum_{n=1}^{K^2-1}
		\frac1{2n}
		\left\{ 
			\sum_{x \in L\Pi \cap \bZ^2}
			\wt P_{L \Pi} (x;2 n)
			-
			A_0(L\Pi,2n)
			-
			A_1(L\Pi,2n)
			-
			A_2(L\Pi,2n)
		\right\}
		\right|
		\\ & \qquad
		\le
		\C %
		\exp \left( - \C \frac{L^2}{K^2} \right)
	\end{split}
	\label{eq:discrete_reference_error}
\end{equation}
for all $1 < K \le L$.

Let us now compare these contributions to their continuum counterparts.  For $A_0$ this is quite simple: recalling \cref{eq:ref_to_a0}, 
\begin{equation}
	\begin{split}
		&
		L^2 \frac{a_0(\Pi)}{n/2}
		-
		\frac12
		A_0 (L\Pi,2n)
		=
		L^2 |\Pi|  P_0 (n)
		-
		\frac12 
		\# \left( L\Pi \cap \bZ^2 \right)
		\wt P_0 (2n)
		\\ & \hphantom{LLLL}
		=
		L^2 |\Pi| \left[ P_0 (n)  - \tfrac12 \wt P_0 (2n) \right]
		+ \frac12 \left[L^2 |\Pi| - \# (L \Pi \cap \bZ^2) \right] \left[\wt P_0(2n)  \right]
		\\ & \hphantom{LLLL}
		=
		\cO\left( \frac{L^2}{n^2} + \frac{L}{n} \right)
		, \quad \frac{L}{n} \to \infty
	\end{split}
	\label{eq:A0_diff}
\end{equation}
using Pick's formula to estimate the difference in areas.

For $A_1$, recalling \cref{eq:ref_to_a1} we have
\begin{equation}
	L \frac{a_1(\Pi)}{(n/2)^{1/2}}
	- 
	\frac12
	A_1(L\Pi,2n) 
	=
	L
	\sum_{E \in \cE}
	\left[ 
		\int_{E}
		\partial P_{H(E)} (x;n)
		\dd x
		-
		\frac12
		\sum_{y \in E \cap \bZ^2}
		\partial \wt P_{H(E)} (y;2n)
	\right]
	;
	\label{eq:A1_diff}
\end{equation}
This difference can be bounded using a similar approach to \cref{sec:convergence_to_continuous}, with a few variations and added elements.  
To begin with, note that
\begin{equation}
	\begin{split}
		&
		\partial P_{H(E)} (x;n)
		-
		\frac12
		\partial \wt P_{H(E)} (y;2n)
		=
		P_{H(E)}(x,x;n)
		-
		P_0 (n)
		-
		\frac12
		\wt P_{H(E)}(y,y;2n)
		+
		\frac12
		\wt P_0(2n)
		\\ & \quad
		=
		P_0(n)
		\left\{ 
			\bP_y^n \left[ \exists s \in (0,2n): \ \wt \cB_s \notin H(E) \right]
			-
			\bP_x^n \left[ \exists s \in (0,n): \ \cB_s \notin H(E) \right]
		\right\}
		\\ & \qquad \qquad
		+
		\left[ 
			1 -
			\frac{ 2 P_0(n)}{\wt P_0(2n)}
		\right]
		\partial \wt 
		P_{H(E)}(y,y;2n)
	\end{split}
	\label{eq:HP_layer_1}
\end{equation}
(cf.\ \cref{eq:layer_diff_1}).  The difference in probabilities on the right hand side of \cref{eq:HP_layer_1} is bounded by the probability that exactly one of the bridge processes leaves the half plane $H(E)$; as in \cref{eq:layer_diff}, we can bound this difference using \cref{thm:bridge_Skor}, giving
\begin{equation}
		\left|
		\partial P_{H(E)} (x;n)
		-
		\frac12
		\partial \wt P_{H(E)} (y;2n)
		\right|
		\le
		P_{[H(E)]^+(\delta)}(x,x;n)
		-
		P_{[H(E)]^-(\delta)}(x,x;n)
		+
		\C/n^2
	\label{eq:HP_layer_2}
\end{equation}
for all $n > 1$, $|x-y|_\infty \le 1/2$, with $\delta = \Cr{dyadic} \log n$.
This difference can be bounded as in \cref{lem:layer_main} using \cref{lem:survival_bound,lem:PD_Beurling,lem:layer_hard}, with the simplification that the complement of $[H(E)]^+(\delta)$ always contains an infinite half-line touching any point on its boundary (in terms of \cref{def:Srho}, it is $S(\rho)$ for any $\rho>0$), giving
\begin{equation}
	\left[ P_{[H(E)]^+(\delta)} - P_{[H(E)]^-(\delta)} \right] (x,x;t)
	\le
	\C
	\begin{cases}
		\frac{(\log t - 2 \log R)^{1/2}R}{t^{3/2}},
		&
		R  < t^{3/7} 
		\\
		\frac{\sqrt{\delta}}{R^{5/2}},
		&
		t^{3/7} \le R
	\end{cases}
	\label{eq:HP_layer_3}
\end{equation}
(A better bound can of course be obtained from the exact expression for the Dirichlet heat kernel on the half-plane, but it would not change the final result).
This is not enough by itself to bound the summand in \cref{eq:A1_diff}, since \cref{eq:HP_layer_2} also contains a term of order $1/n^2$ which does not depend on the distance from the boundary; however,
it does give
\begin{equation}
	\left|
		\int_{E_{R_1}}
		\partial P_{H(E)} (x;n)
		\dd x
		-
		\frac12
		\sum_{x \in E_{R_1} \cap \bZ^2}
		\partial \wt P_{H(E)} (x;2n)
	\right|
	\le
	\C
	\left[
		\frac{\log n}{n^{9/14}}
		+
		\frac{R_1}{n^2}
	\right]
	,
\end{equation}
where $E_{R_1}$ is a rectangular region like $E_\kappa$ (cf.\ \cref{fig:edge}), with $R_1$ any number such that $|E_{R_1}| = \#\left( E_{R_1} \cap \bZ^2 \right)$.  
Further away from the edge, we can use \cref{thm:PD_domain_change,thm:PDt_domain_change} to bound $\partial P_{H(E)}$ and $\partial \wt P_{H(E)}$ separately, and
setting $R_1$ as close as possible to $ n^{19/14}$,
the contribution from $E \setminus E_{R_1}$ is superexponentially small for large $n$. 
In conclusion, we have
\begin{equation}
	\begin{split}
		\left|
		\frac12
		A_1(L\Pi,2n) - L \frac{a_1(\Pi)}{n^{1/2}}
		\right|
		\le
		\C 
		L
		\frac{\log n}{n^{9/14}}
	\end{split}
	\label{eq:A1_a1}
\end{equation}
for $n > 1$; note in particular that this decays faster than $1/\sqrt{n}$ for $n$ large.

For the corner contributions, we can apply the same techniques to bound $P^\angle_C(x;n) - \tfrac12 \wt P^\angle_C(y;2n)$, giving estimated in terms of distances to the vertex $v(C)$, by the same reasoning as in the derivation of \cref{eq:a2_C_rem}; so with $R_1 = n^{25/28}$ we have
\begin{equation}
	\label{eq:A2_diff}
	\begin{split}
		\left|a_2 (\Pi) - \tfrac12 A_2 (L\Pi,2n)\right|
		\le
		\C \left[ 
			\frac{\log n}{n^{3/14}}
			+
			\frac{R_1^2}{n^2}
		\right]
		\le
		\C
		\frac{\log n}{n^{3/14}}
	\end{split}
\end{equation}
for $n$ large enough.

I now define
\begin{align}
	\alpha_0
	& :=
	\log 4
	-
	\sum_{n=1}^\infty
	\frac{\wt P_0(2n)}{2n}
	=
	\log 4
	-
	\frac{1}{\# \left((L \Pi) \cap \bZ^2 \right)}
	\sum_{n=1}^\infty
	\frac{A_0(L\Pi,2n)}{2n}
	\label{eq:alpha0_def}
	, \\
	\alpha_1 (e) 
	& :=
	-
	\frac1{|e|}
	\sum_{n=1}^\infty
	\frac{\hat A_1 (E_e,1,n)}{2n}
	\label{eq:alpha1_def}
	, \\ 
	\alpha_2(c)
	& :=
	\sum_{n=1}^\infty
	\left[ 
		a_2(C_c) \int_n^{n+1} \frac{\dd t}{t}
		-
		\frac{\hat A_2 (C, 2n)}{2 n}
	\right]
	,
	\label{eq:alpha2_def}
\end{align}
for $e$ an edge of $L\Pi$, $E_e$ the associated element of $\cE(L\Pi)$, $c$ a corner and $C_c$ the associated wedge in $\cC(L\Pi)$.
Note that in light of the properties noted after \cref{eq:A2_def}, these definitions have the properties specified in points~2 and~3 of \cref{thm:main},
and furthermore
\begin{equation}
	\begin{split}
		&
		\# \left((L \Pi) \cap \bZ^2 \right) \alpha_0
		+
		\sum_{e \in \fE(L\Pi)} \alpha_1(e) |e|
		+
		\sum_{c \in \fC(L\Pi)} \alpha_2(c)
		\\ & \qquad
		=
		\# \left((L \Pi) \cap \bZ^2 \right) \log 4
		\\ & \qquad \qquad
		-
		\sum_{n=1}^\infty
		\frac1{2n}
		\left[ 
			A_0(L\Pi,2n) 
			+
			A_1(L\Pi,2n)
			+
			A_2(L\Pi,2n)
			-
			2n
			a_2(\Pi)
			\int_{n}^{n+1}
			\frac{\dd t}{t}
		\right]
		.
	\end{split}
	\label{eq:alpha_to_A}
\end{equation}
Note that \cref{eq:A0_diff,eq:A1_a1,eq:A2_diff} imply that the above sum converges; the last term in the summand formally gives the divergent integral $a_2(\Pi) \int_1^\infty \tfrac{\dd t}{t}$ whose divergence is cancelled by $A_2$.

\section{Concluding the proof of \cref{thm:main}}
\label{sec:conclusion}
Using \cref{eq:zeta_det_log,eq:zeta_M_sum},
\begin{equation}
	\log \det \wt \Delta_{L\Pi,L\Sigma}
	=
	\# \left( \Omega \cap \bZ^2 \right) 
	\log 4
	-
	\sum_{n=1}^\infty
	\frac1{2n} 
	\wt P_{L\Pi,L\Sigma} (x,x;2n)
	.
	\label{eq:det_HK_base}
\end{equation}
Recalling \cref{eq:discrete_reference_error}, 
\begin{equation}
	\begin{split}
		\sum_{n=1}^{K^2-1}
		\frac1{2n}
		\wt P_{L \Pi,L\Sigma} (x;2 n)
		=
		&
		\sum_{n=1}^{K^2-1}
		\frac1{2n}
		\left[ 
			A_0(L\Pi,2n)
			+
			A_1(L\Pi,2n)
			+
			A_2(L\Pi,2n)
		\right]
		\\ &
		+
		\cO\left( \exp\left( -\C \frac{L^2}{K^2} \right) \right)
	\end{split}
\end{equation}
for $L/K \to \infty$,
while in \cref{thm:discretization_err_integrated} we saw that
\begin{equation}
		\sum_{n=  K^2}^\infty
		\sum_{y \in L \Pi \cap \bZ^2} \frac{\wt P_{L \Pi,L\Sigma} (y,y;2n)}{2n}
		=
		\int_{K^2}^\infty
		\int_{L\Pi} P_{L \Pi,L\Sigma} (x,x;t) \dd x
		\frac{\dd t}{t}
		+
		\cO
		\left( \frac{L \log L}{K^{9/7}}\right)
\end{equation}
for $K, \ L/K, K^{9/7}/L \to \infty$.
All of the error terms will ultimately have similar forms, so I will  set $K = \ceil( \Cl{KL} L/\sqrt{\log L})$ with $\Clast$ chosen large enough so that the exponential error terms are negligible;  we then have
\begin{equation}
	\begin{split}
		\log \det \wt \Delta_{L\Pi,L\Sigma}
		= &
		\# \left( \Omega \cap \bZ^2 \right) 
		\log 4
		\\ & 
		-
		\sum_{n=1}^{K^2-1}
		\frac1{2n}
		\left[ 
			A_0(L\Pi,2n)
			+
			A_1(L\Pi,2n)
			+
			A_2(L\Pi,2n)
		\right]
		\\ &
		-
		\int_{K^2}^\infty
		\int_{L\Pi} P_{L \Pi} (x,x;t) \dd x
		\frac{\dd t}{t}
		+
		\cO\left( \frac{[\log L]^{23/14}}{L^{2/7}} \right)
	\end{split}
	\label{eq:logdet_split_1}
\end{equation}
for $L \to \infty$.

Recalling the definition of $F_\Pi$ in \cref{eq:F_Pi_def} and the relationship between $P_{\Pi,\Sigma}$ and $\Tr e^{- t \Delta_{\Pi,\Sigma}}$ (\cref{thm:DC_trace}), 
\begin{equation}
	\int_{K^2}^\infty
	\int_{L\Pi} P_{L \Pi} (x,x;t) \dd x
	\frac{\dd t}{t}
	=
	\int_{K^2}^\infty
	\left[ 
		F_{L\Pi,L\Sigma}(t/2)
		+
		\frac{a_0(L\Pi)}{t/2}
		+
		\frac{a_1(L\Pi)}{\sqrt{t/2}}
	\right]
	\frac{\dd t}{t}
\end{equation}
which, combined with \cref{eq:A0_diff,eq:A1_a1}  yields
\begin{equation}
	\begin{split}
		\int_{K^2}^\infty
		&
		\int_{L\Pi} P_{L \Pi} (x,x;t) \dd x
		\frac{\dd t}{t}
		\\ & 
		=
		\int_{K^2/2}^\infty
		F_{L\Pi,L\Sigma}(t)
		\frac{\dd t}{t}
		+
		\sum_{n=K^2+1}^\infty
		\frac{A_0(L\Pi,2n)+A_1(L\Pi,2n)}{2n}
		+
		\cO\left( \frac{[\log L]^{23/14}}{L^{2/7}} \right)
	\end{split}
	\label{eq:big_t_integral}
\end{equation}
with $K$ chosen as discussed above.
Rescaling $F$ as in \cref{eq:F_rescale} and taking advantage of the small $t$ asymptotics in \cref{eq:ht_asympt}, we have
\begin{equation}
	\begin{split}
		\int_{K^2/2}^\infty
		F_{L\Pi,L\Sigma}(t)
		\frac{\dd t}{t}
		= &
		\int_{0}^\infty
		F_{\Pi,\Sigma}(t)
		\frac{\dd t}{t}
		+
		\int_{K^2/2}^{e^{-\gamma}L^2}
		a_2(L\Pi)
		\frac{\dd t}{t}
		+
		\cO\left( \exp\left( -\C \frac{L^2}{K^2} \right) \right)
		\\ = &
		\int_{0}^\infty
		F_{L\Pi,L\Sigma}(t)
		\frac{\dd t}{t}
		-
		\int_1^{K^2}
		a_2(L\Pi)
		\frac{\dd t}{t}
		+
		\cO\left( \exp\left( -\Clast \frac{L^2}{K^2} \right) \right)
	\end{split}
\end{equation}
and combining this with \cref{eq:big_t_integral,eq:logdet_split_1} gives
\begin{equation}
	\begin{split}
		\log \det \wt \Delta_{L\Pi,L\Sigma}
		= &
		\# \left( \Omega \cap \bZ^2 \right) 
		\log 4
		-
		\sum_{n=1}^{\infty}
		\frac1{2n}
		\left[ 
			A_0(L\Pi,2n)
			+
			A_1(L\Pi,2n)
		\right]
		\\ &
		-
		\sum_{n=1}^{K^2}
		\frac{1}{2n}
		\left[ 
			A_2(L\Pi,2n)
			-
			2 n a_2(\Pi)
			\int_n^{n+1} \frac{\dd t}{t}
		\right]
		\\ & +
		\int_{0}^\infty
		F_{L \Pi,L\Sigma} (t)
		\frac{\dd t}{t}
		+
		\cO\left( \frac{[\log L]^{23/14}}{L^{2/7}} \right)
		.
	\end{split}
\end{equation}
Finally, using \cref{eq:A2_diff} to bound the error made in extending the last sum to infinity, and recognizing $\alpha_0, \ \alpha_1$ and $\alpha_2$ using \cref{eq:alpha_to_A} and $\alpha_4(L\Pi,L\Sigma) = \zeta'_{\Delta_{L\Pi,L\Sigma}}(0)$ using \cref{eq:z_reg_Delta_final}, we obtain \cref{eq:main_alt}.  As mentioned before, \cref{thm:main} then follows by using \cref{eq:zeta_rescaling}.

\appendix
\section{The Laplacian with Dirichlet boundary conditions on a double cover}
\label{app:DC}

Let $\Pi_\Sigma$ denote a double cover of $\Pi$ branched around each point in $\Sigma$, and let $p: \Pi_\Sigma \to \Pi$ be the associated projection map.  
Let $L^2(\Pi,\Sigma)$ be the set of equivalence classes (with respect to almost-everywhere equality) of square-integrable functions $f : \Pi_\Sigma \to \bR$ with the property that $f(x) = - f(y)$ if $p(x) = p(y)$ but $x \neq y$.
$L^2(\Pi,\Sigma)$ is evidently a closed linear subspace of $L^2(\Pi_\Sigma)$ and so is itself a Hilbert space with the same inner product.
Let $C_0^\infty(\Pi,\Sigma)$ be the set of smooth functions $f \in L^2(\Pi,\Sigma)$ such that, for all $y \in \partial \Pi_\Sigma$, $\lim_{x \to y} f(x) = 0$ and the corresponding limits of all of the derivatives of $f$ exist.

The Laplacian $\Delta_{\Pi,\Sigma}$ is defined as the Friedrichs extension of the (positive) Laplacian $- \nabla \cdot \nabla $ on $C_0^\infty$, which is self-adjoint with respect to the inner product of $L^2(\Pi,\Sigma)$.  It is a strictly positive operator: letting $\Delta_\Pi$ be the Dirichlet Laplacian on $\Pi$ (not the double cover), for any $f \in C_0^\infty (\Pi,\Sigma) $ the function $\tilde f$ defined by $\tilde f(p(x)) = |f(x)|$ is in the domain of $\Delta_\Pi$; we can extend it to a function in the domain of $\Delta_R$ for $R$ a rectangle containing $\Pi$, and then
\begin{equation}
	\frac{(f,\Delta_{\Pi,\Sigma}f)}{(f,f)}
	\ge
	\frac{(\tilde f, \Delta_\Pi \tilde f)}{(\tilde f, \tilde f)}
	=
	\frac{(\tilde f, \Delta_R \tilde f)}{(\tilde f, \tilde f)}
	\label{eq:D_domain_monotone}
\end{equation}
where in each expression $(\cdot,\cdot)$ denotes the appropriate $L_2$ inner product;
of course the spectrum of $\Delta_R$ is explicitly known and strictly positive, so the last ratio is uniformly bounded from below by a positive number, a bound which thus extends to all $f$ in the domain of $\Delta_{\Pi,\Sigma}$.
Among other things, this combines with the spectral theorem to imply that $e^{-t \Delta_{\Omega,\Sigma}}$  is defined for all $t \ge 0$.

For $\cW_\cdot$ a Brownian trajectory on $\bR^2$ which does not intersect $\Sigma$ (which is finite, so these trajectories form a full measure set), let $\cW_\cdot^\Sigma$ denote its continuous lift to the dual cover of $\bR^2$ branched around $\Sigma$, starting from a specified starting point $x \in \Pi_\Sigma$.  Consider the family of operators $K_t$ on $L^2(\Pi,\Sigma)$ defined by
\begin{equation}
	[K_t f](x)
	=
	\bE_x \left[ f\left( \cW_t^\Sigma \right) \ind_{[t,\infty)}(T_\Pi) \right]
	.
	\label{eq:DC_kernel_start}
\end{equation}
For $f \in C_0^\infty(\Pi,\Sigma)$, by definition $f(\cW^\Sigma_{T_\Pi})=0$, and so 
$f\left( \cW^\Sigma_t \right)\ind_{[t,\infty)}(T_\Pi) = f(\cW^\Sigma_{t \wedge T_\Pi})$;
bearing in mind that $\Delta_{\Pi,\Sigma}$ acts locally on smooth functions like the ordinary Laplacian, It\^o's formula implies that
\begin{equation}
	\frac{\partial}{\partial t} [K_t f](x)
	=
	- \bE_x \left[\tfrac12 \Delta_{\Pi,\Sigma} f\left( \cW_{t \wedge T_\Pi}^\Sigma \right)  \right]
	=
	- [\tfrac12 K_t \Delta_{\Pi,\Sigma} f](x)
	;
	\label{eq:Kt_Ito}
\end{equation}
by continuity, this implies $\tfrac{\dd}{\dd t} K_t = - \tfrac12 K_t \Delta_{\Pi,\Sigma}$ as operators on $L^2(\Pi,\Sigma)$, and evidently $K_0$ is the identity; in other words $K_t$ is a one-parameter group generated by $- \tfrac12 \Delta_{\Pi,\Sigma}$; there is only one such group \cite[Theorem~1.7]{Davies.1param}, so $K_t = e^{- (t/2) \Delta_{\Pi,\Sigma}}$.

\begin{theorem}
	For any closed, simply connected $B \subset \Pi$, there is a function $Q_B(x,y;t)$ from $\Pi_\Sigma \times p^{-1}(B) \times (0,\infty)$ to $\bR$, such that
	\begin{equation}
		\left[ e^{-(t/2) \Delta_{\Pi,\Sigma}} g \right] (x)
		=
		\int_{p^{-1}(B)}
		Q_B(x,y;t)
		g(y)
		\dd y
		\label{eq:Q_kernel_1}
	\end{equation}
	for any $g \in L^2(\Pi,\Sigma)$ with $\supp g \subset B$;
	furthermore $Q_B$ is $C^\infty$ as a function of $y$, and
	\begin{equation}
		0 \le Q_{B} (x,y;t) \le P (p(x),p(y);t)
		,
		\label{eq:QB_bounded}
	\end{equation}
	where $P$ is the full-plane heat kernel introduced in \cref{eq:P_base}.
	\label{lem:local_DC_kernel}
\end{theorem}
\begin{proof}
	Let $B_1,B_2 \subset \Pi$ be open and simply connected with ${\lis B_1 \subset B \subset B_2}$ and $\delta := d(B,\partial B_1) \vee d(B_1, \partial B_2) > 0$.
	Letting $\cW$ denote a planar Brownian motion started at $x$, I introduce two related sequences of hitting times as follows: let $\tau_0 = \sigma_0 = 0$, and for $j=1,2,\dots$
	\begin{gather}
		\sigma_j 
		 :=
		\inf \left\{ \sigma \ge \tau_{j-1} : \ \cW_\sigma \in B_1, \ \sigma < T_\Pi \right\}
		\\
		\tau_j 
		 := 
		\inf \left\{  \tau > \sigma_j : \ \cW_\tau \notin B_2 \right\}
		;
	\end{gather}
	with these definitions (thanks in particular to the nonzero distance between $B_1$ and the boundary of $B_2$), $J := \max \{j = 0,1,\dots : \ \sigma_j \le t\}$ is almost surely finite.
	Furthermore
	\begin{equation}
		g\left( \cW_t^\Sigma \right)
		\ind\left\{ T_\Pi > t  \right\}
		=
		g\left( \cW_t^\Sigma \right)
		\ind\left\{ J \ge 1  \right\}
		\ind\left\{ \tau_J > t \right\}
	\end{equation}
	and so, recalling \cref{eq:DC_kernel_start}, 
	\begin{equation}
		\begin{split}
			\left[ e^{-(t/2) \Delta_{\Pi,\Sigma}} g \right] (x)
			&=
			\bE_x\left[ 
				g\left( \cW_t^\Sigma \right)
				\ind\left\{ J \ge 1  \right\}
				\ind\left\{ \tau_J > t \right\}
			\right]
			\\ & =
			\bE_x\left[ 
				\ind\left\{ J \ge 1  \right\}
				G\left( \cW_{\sigma_J}^\Sigma; t- \sigma_J \right)
			\right]
		\end{split}
		\label{eq:DC_kernel_1}
	\end{equation}
	using the strong Markov property, where 
	\begin{equation}
		G(y;s) 
		:=
		\bE_{y}\left[ g\left( \cW_s^{\Sigma} \right) \ind\left\{ T_{B_2 }> s \right\} \right]
		.
		\label{eq:DC_Gys_def}
	\end{equation}
	Since $B_2$ is simply connected, thanks to the indicator function $G(y;s)$ is an integral over paths which remain in the same connected component of $p^{-1}(B_2)$ as $y=\cW_{\sigma_J}^\Sigma$.  To spell this out, I introduce the following notation:  Let $B_2^1$ and $B_2^2$ be the two connected components of $B_2$, 
	\begin{equation}
		\kappa(x) := \begin{cases}
			j , & x \in B_2^j
			\\
			0, & p(x) \notin B_2
		\end{cases}
		\label{eq:DC_kappa_def}
	\end{equation}
	and $g_{\kappa}:B_2 \to \bR$, $x \mapsto g \left( [p |_{B_2^\kappa}]^{-1}(x) \right)$.
	Then
	\begin{equation}
		\begin{split}
			G(y;s) 
			&= 
			\bE_{p(y)} \left[ g_{\kappa(y)}\left( \cW_s \right) \ind\left\{ T_{B_2}>s \right\}\right]
			\\ & =
			\int_{p^{-1}(B)} 
			\ind \left\{ z \in B_2^{\kappa (y)} \right\}
			g(z) P_{B_2} (p(y),p(z);s) \dd z
		\end{split}
	\end{equation}
	where $P_{B_2}$ is the Dirichlet heat kernel as introduced in \cref{eq:PD_Om_def}.
	Returning to \cref{eq:DC_kernel_1}, we then have
	\begin{equation}
		\left[ e^{-(t/2) \Delta_{\Pi,\Sigma}} g \right] (x)
		=
		\int_{p^{-1}(B)} 
		g(z)
		Q_B(x,z;t)
		\dd z
		\label{eq:QB_kernel_intro}
	\end{equation}
	with
	\begin{equation}
		\begin{split}
			&
			Q_B(x,z;t)
			 :=
			\bE_x\left[ 
				\ind\left\{ J \ge 1 \right\}
				\ind \left\{ z \in B_2^{\kappa (\cW_{\sigma_J}^\Sigma)} \right\}
				P_{B_2} (\cW_{\sigma_J},p(z); t- \sigma_J)
			\right]
			\\ & \ = 
			\bE_x\Big[ 
				\ind\left\{ J \ge 1 \right\}
				\ind \left\{ z \in B_2^{\kappa (\cW_{\sigma_J}^\Sigma)} \right\}
			\begin{aligned}[t]
				\Big(
				& P\left( \cW_{\sigma_J}, p(z); t - \sigma_J \right)
				\\& -  \bE_{\cW_{\sigma_J}}\left[ 
					P\left( \cW_{T_{B_2}}, p(z); t- \sigma_J - t_{B_2} \right)
				\right]
			\Big) \Big]
			.
			\end{aligned}
		\end{split}
		\label{eq:QB_def}
	\end{equation}
	The bounds in \cref{eq:QB_bounded} are evident from the first form of this definition, since $P_{B_2}$ and $P$ are non-negative.
	Note also that the last expression in \cref{eq:QB_def} is an integral over values of $P(y,p(z);s)$ with either $s=t$ (if $\sigma_J =0$) or $y \in \partial B_1 \cup \partial B_2 $ (and thus $|y - p(z)| \ge \delta$).  With the other arguments so restricted, $P$ is uniformly smooth as a function of $z$, and so we also see that $Q_B$ is smooth as claimed.
\end{proof}

Given that they are smooth in the second argument, any two kernels $Q_B(x,\cdot;y)$ and $Q_{B'}(x,\cdot;t)$ defined on different domains coincide on the intersection of their domains, and so we can extend by linearity to the rest of $L^2(\Pi,\Sigma)$ and obtain
\begin{cor}
	There is a function $Q_{\Pi,\Sigma} : \Pi_\Sigma \times \Pi_\Sigma \times (0,\infty) \to \bR$ such that
	\begin{equation}
		\left[ e^{-(t/2) \Delta_{\Pi,\Sigma}} g \right] (x)
		=
		\int_{\Pi_\Sigma}
		Q_{\Pi,\Sigma}(x,y;t)
		f(y)
		\dd y
		\label{eq:QPS_kernel}
	\end{equation}
	for any $f \in L^2(\Pi,\Sigma)$;
	furthermore $Q_{\Pi,\Sigma}$ is $C^\infty$ as a function of $y$, and
	\begin{equation}
		0 \le Q_{\Pi,\Sigma} (x,y;t) \le P (p(x),p(y);t)
		.
		\label{eq:QPS_bound}
	\end{equation}
	\label{cor:QPS}
\end{cor}
With this result in hand, I am ready to present the
\begin{proof}[Proof of \cref{thm:DC_trace}]
	For any simply-connected, Borel-measurable $U \subset \Pi$, let ${f_U:\Pi_\Sigma \to \bR}$ be either of the two functions which is equal to 1 on one of the connected components of $p^{-1}(U)$, -1 on the other, and zero elsewhere.  Evidently ${f_U \in L^2(\Pi,\Sigma)}$, $(f_U,f_U) = 2|U|$, and
	\begin{equation}
		\begin{split}
			\left( f_U, e^{-(t/2) \Delta_{\Pi,\Sigma}} f_U \right)
			& =
			\int_{p^{-1}(U)\times p^{-1}(U)} Q_{\Pi,\Sigma}(x,y;t) \gamma_U(x,y) \dd x \dd y
			\\ & =
			\int_{p^{-1}(U)} \bE_x\left[ \gamma_U (\cW_0^\Sigma,\cW^\Sigma_t) \ind\left\{ \cW_t \in U \right\} \ind\left\{ T_\Pi > t \right\} \right]
			\dd x
		\end{split}
		\label{eq:fU_exp_IP}
	\end{equation}
	where
	\begin{equation}
		\gamma_U(x,y)
		=
		f_U(x) f_U(y)
		=
		\begin{cases}
			1
			, &
			\begin{aligned}
			& x \text{ and } y \text{ are in the same} \\ &\text{connected component of } p^{-1}(U)
			\end{aligned}
			\\
			-1
			, &
			\text{otherwise,}
		\end{cases}
		\label{eq:DC_gamma}
	\end{equation}
	which is independent of which $f_U$ was chosen.  Note also that if $\diam U$ is small enough, in the last integral $ \gamma(\cW^\Sigma_0,\cW^\Sigma_t)$ is equivalent to $e^{i \pi W_\Sigma(\cW;t)}$ (recall that $W_\Sigma$ was defined in the statement of \cref{thm:DC_trace} to take integer values).
	Examining \cref{eq:PD_Sig} in light of this, we see that for $x \in \Pi_\Sigma$ the definition of $P_{\Pi,\Sigma}$ there gives
	\begin{equation}
		\begin{split}
			P_{\Pi,\Sigma}(p(x),p(x);t)
			& =
			\lim_{r \to 0^+}
			\frac{1}{\pi r^2}
			\int_{p^{-1}(B_r(x))}
			\gamma_{B_r(x)}(x,y)
			Q_{\Pi,\Sigma}(x,y;t)
			\\ & =
			\sum_{y : \ p(y) = p(x)}
			\begin{cases}
				Q_{\Pi,\Sigma}(x,y,;t)
				, & 
				x = y
				\\
				- Q_{\Pi,\Sigma}(x,y,;t)
				, & 
				x \neq y
			\end{cases}
		\end{split}
		\label{eq:PD_Sig_to_Q}
	\end{equation}
	thanks to the smoothness of $Q_{\Pi,\Sigma}$ shown in \cref{cor:QPS}.

	Let $\cU_N$, for $N$ a positive integer, denote the nonempty sets which coincide with a connected component of a set of the form $\Pi \cap 2^{-N} \left( [m,m+1) \times [n,n+1) \right)$ for some $m,n \in \bZ$; then $\cU_N$ is a finite set of disjoint, simply connected Borel sets whose union is $\Pi$.
	Numbering the elements of each $\cU_N$ arbitrarily, so that
	\begin{equation}
		\cU_N = \left\{ U_{N1}, \dots, U_{N|\cU_N|}\right\},
	\end{equation}
	the sequence $f_{U_{11}},\dots,f_{U_{1|\cU_1|}},f_{U_{21}},\dots$ is an overcomplete basis of $L^2(\Pi,\Sigma)$ and by applying the Hilbert-Schmidt process to this sequence we obtain an orthonormal basis $\psi_1,\psi_2,\dots$ such that
	\begin{equation}
		\sum_{j=1}^{|\cU_N|} 
		(\psi_j, e^{- t \Delta_{\Pi,\Sigma}} \psi_j)
		=
		\sum_{U \in \cU_N} \frac{(f_U, e^{-t \Delta_{\Pi,\Sigma}}f_U)}{(f_U,f_U)}
	\end{equation}
	for all positive integer $N$.  Since $e^{-t \Delta_{\Pi,\Sigma}}$ is non-negative, this suffices to show that
	\begin{equation}
		\Tr e^{-t \Delta_{\Pi,\Sigma}}
		=
		\sum_{j=1}^\infty 
		(\psi_j, e^{- t \Delta_{\Pi,\Sigma}} \psi_j)
		=
		\lim_{N \to \infty }
		\sum_{U \in \cU_N} \frac{(f_U, e^{-t \Delta_{\Pi,\Sigma}}f_U)}{(f_U,f_U)}
		.
		\label{eq:DC_trace_to_lim_sum}
	\end{equation}
	Using \cref{eq:fU_exp_IP}, 
	\begin{equation}
		\sum_{U \in \cU_N} \frac{(f_U, e^{-t \Delta_{\Pi,\Sigma}}f_U)}{(f_U,f_U)}
		=
		\frac12
		\int_{\Pi_\Sigma}\left[ 
			\frac{1}{|U_N(x)|}
			\int_{p^{-1}(U_N(x))}
			\gamma_{U_N(x)}(x,y)
			Q_{\Pi,\Sigma}(x,y;2t)
			\dd y
		\right]
		\dd x
		,
	\end{equation}
	where $U_N(x)$ is the unique element of $\cU_N$ which includes $x$.
	As $N \to \infty$, the quantity in brackets converges pointwise to the same limit as in \cref{eq:PD_Sig_to_Q} with $t$ replaced by $2t$, i.e.\ to $P_{\Pi,\Sigma}(p(x),p(x);2t)$, and this together with the bound in \cref{eq:QPS_bound}
	 provide the conditions to apply the dominated convergence theorem and obtain
	 \begin{equation}
		\Tr e^{-t \Delta_{\Pi,\Sigma}}
		=
		\frac12
		\int_{\Pi_\Sigma}
		P_{\Pi,\Sigma}(p(x),p(x);2 t)
		\dd x
	 \end{equation}
	 which, passing to an integral over $x \in \Pi$, is equivalent to \cref{eq:trace_PD}.
\end{proof}

\section{A simple bound on the distribution of a simple random walk}
\begin{lemma}
	For all integers $n \ge 1$ and $0 \le m \le n$,
	\begin{equation}
		2^{-n}
		\binom{n}{m}
		\le
		\frac{\C}{\sqrt n}
		\exp\left( -\frac{(2m-n)^2}{2n} \right)
		.
		\label{eq:RW_Gaussian}
	\end{equation}
	\label{lem:RW_Gaussian}
\end{lemma}

\begin{proof}
	Firstly, note that if $m=0$ or $m=n$, the bound is obvious, so we consider $1 \le n \le n-1$.
	In this case, using Stirling's formula with explicit error estimates \cite{Robbins},
	\begin{equation}
		\begin{split}
			\log \binom{n}{m}
			& \le
			- \tfrac12 \log 2\pi + (n+1/2) \log n - (m + 1/2) \log m 
			\\ & \quad \quad
			- (n-m+1/2) \log (n-m)
			+ \frac{1}{12n}
			.
		\end{split}
		\label{eq:Stirling}
	\end{equation}
	Letting $L_n(\mu) = (\mu + 1/2) \log \mu + (n- \mu+1/2)\log (n-\mu)$, we have
	$
		L'_n(n/2) = 0
		$,
	\begin{equation}
		L''_n (\mu)
		=
		\frac{1}{\mu}
		+
		\frac{1}{n-\mu}
		-
		\frac{1}{2\mu^2}
		-
		\frac{1}{2(n-\mu)^2}
		\label{eq:Lpp}
	\end{equation}
	and
	\begin{equation}
		\begin{split}
			L'''_n(\mu)
			& =
			-\frac{1}{\mu^2}
			+\frac{1}{(n-\mu)^2}
			+\frac{1}{\mu^3}
			-\frac{1}{(n-\mu)^3}
			\\ & =
			\frac{2\mu-n}{\mu^3(n-\mu)^3}
			\left[ (n+1)\mu^2 - (n^2+n)\mu + n^2 \right]
			;
		\end{split}
		\label{eq:Lp3}
	\end{equation}
	evidently $n/2$ is the unique local minimum of $L''_n(\mu)$. It is easy to see that
	\begin{equation}
		L''_n(1)
		=L''_n(n-1)
		=
		\frac12 + \frac{1}{n-1} - \frac{1}{2(n-1)^2}
		>
		L''_n(n/2)
		=
		\frac{4}{n}
		-
		\frac{4}{n^2}
	\end{equation}
	for $n$ large enough (in fact for $n \ge 4$), and so we have
	\begin{equation}
		\begin{split}
			L_n (\mu) 
			& \ge 
			L_n (n/2) 
			+
			L''_n (n/2) \frac{(m-n/2)^2}{2}
			\\ & =
			(n+1) \log (n/2)
			+
			\frac{(2m-n)^2}{2 n}
			-
			\frac{(2m-n)^2}{2 n^2}
			.
		\end{split}
		\label{eq:Ln_bound}
	\end{equation}
	Noting that the last term is bounded by $1/2$, this combines with \cref{eq:Stirling} to give
	\begin{equation}
		\log \binom{n}{m}
		\le
		n \log 2
		- \frac{(2m-n)^2}{2 n}
		- \frac12 \log n
		+
		\C
	\end{equation}
	for sufficiently large $n$, and the result can be trivially extended to smaller $n$.
\end{proof}

\section*{Acknowledgements}

I thank Bernard Duplantier for introducing me to this problem, and pointing out its relationship to the heat kernel.  The outline of the proof occurred to me during the workshop ``Dimers, Ising Model, and their Interactions'' at the Banff International Research Station, and I thank the organizers for the invitation to attend the workshop as well as the many participants in the discussions which provoked it.  I would also like to thank Alessandro Giuliani for discussing the work with me through all the stages of development, and the anonymous referee whose comments led to a considerable number of corrections and improvements.

This work, a major part of which was conducted at the Department of Mathematics and Physics of the Università degli Studi Roma Tre (Rome, Italy), was supported by the European Research Council (ERC) under the European Union's Horizon 2020 research and innovation programme (ERC CoG UniCoSM, grant agreement n.724939 and ERC StG MaMBoQ, grant agreement No. 802901).  The final stages were also partially supported by  the MIUR Excellence Department Project MatMod@TOV awarded to the Department of Mathematics, University of Rome Tor Vergata.

\section*{Data availability}

Data sharing is not applicable to this article as no new data were created or analyzed in this study.
\printbibliography[heading=bibintoc]
\end{document}